\theoremstyle{plain}
\newtheorem{de}{Definition}[section]
\newtheorem{theo}{Theorem} [section] 
\newtheorem{prop}{Proposition}[section]
\theoremstyle{definition}
\newtheorem{ex}{Examples}[section]
\theoremstyle{remark}
\newtheorem{rem}{Remarks}[section]
\newcommand{\Hcal}{\mathcal{H}}
\newcommand{\Kcal}{\mathcal{K}}
\newcommand{\Ncal}{\mathcal{N}}
\newcommand{\Lcal}{\mathcal{L}}
\newcommand{\Pcal}{\mathcal{P}}
\newcommand{\Acal}{\mathcal{A}}
\newcommand{\Bcal}{\mathcal{B}}
\newcommand{\Mcal}{\mathcal{M}}
\newcommand{\Prob}{\mathbb{P}}
\newcommand{\Pbb}{\mathbb{P}}
\newcommand{\Wbb}{\mathbb{W}}
\newcommand{\R}{\mathbb{R}}
\newcommand{\Lbb}{\mathbb{L}}
\newcommand{\Sbb}{\mathbb{S}}
\newcommand{\C}{\mathbb{C}}
\newcommand{\Ker}{\text{Ker}\ }
\newcommand{\Ima}{\text{Im}\ }
\newcommand{\sca}[2]{\langle #1 ,#2\rangle}
\newcommand{\proj}[2]{|#1\rangle\langle #2|}
\newcommand{\Tr}{\mbox{\rm Tr}}
\def\indic{{\mathop{\rm 1\mkern-4mu l}}}
\font\timesept=cmr7
\numberwithin{equation}{section}
\begin{document}
\title{Classical and Quantum Part of the Environment\\ for Quantum Langevin Equations\thanks{Work supported by A.N.R. grant: ANR-14-CE25-0003 "StoQ" }}
\author{St\'ephane Attal and Ivan Bardet}
\date{}
\maketitle
\vskip -8mm
\centerline{\timesept 
Institut Camille Jordan}
\vskip -1mm
\centerline{\timesept Universit\'e Claude Bernard Lyon 1}
\vskip -1mm
\centerline{\timesept 43 bd du 11 novembre 1918}
\vskip -1mm
\centerline{\timesept 69622 Villeubanne cedex, France
}

\bigskip
\begin{abstract}
 Among quantum Langevin equations describing the unitary time evolution of a quantum system in contact with a quantum bath, we completely characterize those equations which are actually driven by classical noises. The characterization is purely algebraic, in terms of the coefficients of the equation. In a second part, we consider general quantum Langevin equations and we prove that they can always be split into a maximal part driven by classical noises and a purely quantum one. 
\end{abstract}

\section{Introduction}\label{intro}

Since the construction of Quantum Stochastic Calculus and the corresponding quantum stochastic differential equations (quantum Langevin Equations) on the symmetric Fock space (\cite{H-P}), it is well-known that both classical and quantum noises could coexist in the equation. The framework is designed for quantum noises, however some classical noises can also appear with some particular combinations of the quantum noises (see S. Attal's lecture in \cite{AJP2}). This fact was the starting point of the recent articles \cite{ADP2} and \cite{ADP2016}, where the authors characterized all the possible classical processes that can emerge in such quantum Langevin equations: they are the complex normal martingales in $\C^n$; up to a unitary transform of $\C^n$ they are combinations of independent Wiener processes and Poisson processes in different directions of the space.

Let us be more explicit with some simple examples. The quantum system state space is the separable Hilbert space $\Hcal$, possibly infinite dimensional, whereas the quantum heat bath is represented by quantum noises $da^i_j(t)$ (see Subsection \ref{notations} for the notations) on the symmetric Fock space $\Phi=\Gamma_s(L^2(\R^+; \C))$. In the simplest case where $n=1$, the joint evolution between the system and its environment can be described by a one-parameter family of unitary operators $(U_t)$, solving a quantum Langevin equation:
\begin{equation}\label{eqHPintro}
dU_t=-\left(iH+\frac{1}{2}L^*L\right)\,U_t\,dt+L\,U_t\, da^0_1(t)-L^*S\, U_t\,da^1_0(t)+(S-I)\,U_t\,da^1_1(t)\,.
\end{equation}
where $H,L$ and $S$ are operators on $\Hcal$ such that $H=H^*$ and $S$ is a unitary operator. 

In the particular case where $S=I$ and $L=-L^*$, Equation \eqref{eqHPintro} becomes
\begin{equation}\label{eqHPintrobruitbrownian}
dU_t=-\left(iH+\frac{1}{2}L^2\right)\,U_t\,dt+L\,U_t\,\left(da^0_1(t)+da^1_0(t)\right)\,.
\end{equation}
But it is well-known that the combinations of operators $B_t=a^0_1(t)+a^1_0(t)$ are naturally isomorphic to the multiplication operators by the Brownian motion on its canonical space. Hence Equation \eqref{eqHPintrobruitbrownian} is actually a Brownian motion driven unitary evolution:
$$
dU_t=-\left(iH+\frac{1}{2}L^2\right)\,U_t\,dt+L\,U_t\,dB_t\,.
$$
Note that the conditions on $H$ and $S$ are the most general ones for a Brownian motion driven operator-valued equation to give unitary solutions. 

In the other particular case where $L=\rho\,(S-I)$, for some $\rho>0$, Equation \eqref{eqHPintro} becomes
\begin{equation}\label{eqHPintrobruitpoisson}
dU_t=\left(iH-\frac{1}{2}L^*L\right)\,U_t\,dt+L\,U_t\, \left(da^0_1(t)+da^1_0(t)+\frac 1\rho\,da^1_1(t)\right)\,.
\end{equation}
The combinations of operators $X_t=a^0_1(t)+a^1_0(t)+\left(1/\rho\right)\,\, a^1_1(t)$ are naturally isomorphic to the multiplication operators by the compensated Poisson process with intensity $\rho^2$ and jumps $1/\rho$ on its canonical space. Hence Equation \eqref{eqHPintrobruitpoisson} is actually a Poisson process driven unitary evolution:
\begin{equation*}
dU_t=-\left(iH-\frac{1}{2}L^*L\right)\,U_t\,dt+L\,U_t\, dX_t\,.
\end{equation*}
Note that the conditions above on the coefficients are the most general ones for a Poisson process driven operator-valued equation to give unitary solutions. 

\smallskip
In general, it can be shown that for the von Neumann algebra generated by $\Bcal(\Hcal)\otimes I_\Phi$ and $\{U_t\}_{t\geq0}$ to be of the form $\Bcal(\Hcal)\otimes\Acal$ with $\Acal$ commutative, one of the two following conditions must holds:

\smallskip\noindent
-- either $S=I_\Hcal$ and there exists $\theta\in\R$ such that $L^*=e^{i\theta}L$,

\smallskip\noindent
-- or there exists a complex number $\lambda$ such that $L=\lambda(S-I_\Hcal)$.

\smallskip
The first motivation of this paper is to give a criteria on the unitary solution of a quantum Langevin Equation so that it is driven by classical noises, generalizing the preceding remark.

On the other hand, some evolutions are understood to be typically quantum or non-commutative, although there is no clear definition of what it means. This is for instance the case for the spontaneous emission, where the evolution is given by the unitary solution of the following quantum Langevin Equation:

\begin{equation}\label{eqspontem}
dU_t=-\frac{1}{2}V^*V\,U_t\,dt+V\, U_t\,da^0_1(t)-V^*\,U_t\,da^1_0(t),
\end{equation}
where 
$$
V=\begin{pmatrix}0&1\\0&0\end{pmatrix}\,.
$$
In a second part of the article we show that any quantum Langevin equation can be split into two parts: a maximal commutative one (that is, driven by a classical noise and maximal in dimension) and a purely quantum one (that is, which contains no classical part whatsoever). 

\smallskip
The article is structured as follows. In Section 2 we recall a few notations concerning quantum noises, quantum Langevin equations and their probabilistic interpretations. We discuss the notion of \emph{change of noise}, that is, the effect on the quantum noise of a change of basis. We recall the definition and the main properties of the \emph{noise algebra}, as defined in \cite{Bar2}.

Section 3 is devoted to our main result: a complete characterization of those quantum Langevin equations that give rise to a commutative noise algebra. The characterization is given both in algebraic properties of the coefficients and in probabilistic interpretations of the classical noises appearing in the equation. As a corollary we obtain a characterization of the corresponding Lindblad generators. 

In Section 4, we are back to general quantum Langevin equations and we prove that they all admit a splitting into a maximal commutative part and a purely quantum one (in the sense that it admits no commutative subspace). We end this section and the article with some discussion and examples.

\section{The Noise Algebra}

\subsection{Notations}\label{notations}

Let us recall here a few notations concerning quantum noises.

\smallskip
Let $\Kcal$ be a \emph{finite} dimensional Hilbert space of dimension $d$. We put $\Lambda=\{1,...,d\}$ and we consider a fixed orthonormal basis $(e_i)_{i\in\Lambda}$ of $\Kcal$. We denote by $\Phi=\Gamma_s\left(L^2(\R^+,\Kcal)\right)$ the symmetric Fock space over $L^2(\R^+,\Kcal)$. We are given ourselves another auxiliary separable Hilbert space $\Hcal$, possibly infinite dimensional, which represents the "small system" state space in quantum Langevin equations. We put $\Psi=\Hcal\otimes\Phi$. 

On the Fock space $\Phi$ we consider the usual creation operators $A^\dagger(f)$ and annihilation operators $A(f)$, for all $f\in L^2(\R^+,\Kcal)$; we also consider the differential second quantization operators $d\Gamma(H)$, for all self-adjoint operator $H$ on $L^2(\R^+,\Kcal)$. The quantum noises $a^i_j(t)$, $i,j\in\Lambda\cup\{0\}$, are then defined as follows:
\begin{align*}
a^0_i(t)&=A^\dagger\left(\indic_{[0,t]}\,\vert e_i\rangle\right)\,,\\
a^i_0(t)&=A\left(\indic_{[0,t]}\,\langle e_i\vert\right)\\
a^i_j(t)&=d\Gamma\left(\vert e_j\rangle\langle e_i\vert\Mcal_{\indic_{[0,t]}}\right)\,,
\end{align*}
where $\indic_{[0,t]}$ is the the usual indicator function of the interval $[0,t]$, where $\Mcal_{\indic_{[0,t]}}$ is the multiplication operator by $\indic_{[0,t]}$ and where we used the usual ``bra" and ``ket" notations for vectors and linear forms on $\Kcal$. 

\smallskip
Quantum noises are driving quantum Langevin equations of all sorts. But it is a remarkable fact that some particular combinations of the quantum noises actually represent well-known classical noises. It can be shown (see S. Attal's lecture in \cite{AJP2}) that:

\smallskip\noindent
-- the operators $a^0_i(t)+a^i_0(t)$ are naturally isomorphic to the multiplication operators by independent Brownian motions $B^i_t$ acting on their canonical spaces,

\smallskip\noindent
-- the operators $a^0_i(t)+a^i_0(t)+\lambda a^i_i(t)$ are naturally isomorphic to the multiplication operators by independent compensated Poisson processes $X^i_t$, with jumps $\lambda$ and intensity $1/\lambda^2$,  acting on their canonical spaces. 

\smallskip
In some part of the proof of the main theorem (Theorem \ref{theocomnoisealgebra}) we shall meet, in one particular case, the operator process $(a^i_i(t))_{t\in\R^+}$ alone. Though it is a commutative family of self-adjoint operators and as such they are unitarily equivalent to multiplication operators by a classical process, they are of deterministic law $\delta_0$ in the reference state of the Fock space (the vacuum state). Hence they bring nothing to the probabilistic interpretation of the associated equation, nor to the associated Lindblad generator. They are of no effect on the small system. 

\subsection{Quantum Langevin Equations}

Now we recall a few elements of unitary quantum Langevin equations, in the framework of Hudson-Parthasarathy Quantum Stochastic calculus \cite{H-P} \cite{Mey2}.

\smallskip
On the space $\Psi$ we consider the following quantum stochastic equation
\begin{equation}
dU_t=\sum_{i,j\in\Lambda\cup\{0\}}L^i_j\,U_t\,da_j^i(t)\,,
\label{eqHPeq}
\end{equation}
where the $a^i_j$'s are the quantum noises on $\Phi$ and where the $L^i_j$'s are operators on $\Hcal$. The following well-known theorem characterizes in terms of the operators $L^i_j$ the fact that the solution $(U_t)$ is made of unitary operators or not.

\begin{theo}[\cite{H-P}]\label{theoHPeq}
If the  $L_j^i$'s are bounded operator on $\Hcal$, then Equation \eqref{eqHPeq} admits a unique solution on $\Bcal(\Psi)$.
\\Furthermore, this solution is made of unitary operators $U_t$ if and only if there exist bounded operators $H$ and $S^i_j$ ($i,j\in\Lambda$) on $\Hcal$ such that

\smallskip\noindent
i) the operator $H$ is selfadjoint,

\smallskip\noindent
ii) the operator $\Sbb=\sum_{i,j\in\Lambda}{S^i_j\otimes\proj{j}{i}}$ on $\Bcal(\Hcal\otimes\Kcal)$ is unitary

\smallskip\noindent
iii) the coefficients $L^i_j$ are of the form:
\begin{align*}
L_0^0&= -iH-\frac{1}{2}\sum_{k\in\Lambda}{\left(L^0_k\right)^*L^0_k}\\
L^i_0&=-\sum_{j\in\Lambda}{\left(L^0_j\right)^*S^i_j} \\
L^i_j&=S^i_j-\delta_{i,j}I_\Hcal\,,
\end{align*}
for all $ i,j\in\Lambda$. 
\end{theo}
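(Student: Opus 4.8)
The natural framework is Hudson--Parthasarathy quantum It\^o calculus, and every computation below rests on the quantum It\^o multiplication table for the increments, which in the index convention of Subsection \ref{notations} takes the form $da^\alpha_\beta(t)\,da^\mu_\nu(t)=\hat\delta^{\alpha}_{\nu}\,da^\mu_\beta(t)$, where $\hat\delta^\alpha_\nu=1$ if $\alpha=\nu\in\Lambda$ and $0$ otherwise; in particular any product contracting the time index $0$ vanishes. For the existence and uniqueness statement I would run the standard Picard iteration: set $U^{(0)}_t=I_\Psi$ and $U^{(n+1)}_t=I_\Psi+\sum_{i,j}\int_0^t L^i_j\,U^{(n)}_s\,da^i_j(s)$, and invoke the fundamental estimates on quantum stochastic integrals (tested against exponential vectors) to prove that the resulting series of iterated integrals converges in operator norm, uniformly on compact time intervals, to a bounded adapted process solving \eqref{eqHPeq}; applying the same estimate to the difference of two solutions yields uniqueness. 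As this is classical I would only recall it, referring to \cite{H-P} and \cite{Mey2}.

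For the characterization of unitarity, the key identity is obtained by differentiating $U_t^*U_t$. Writing the adjoint equation $dU_t^*=\sum_{i,j}U_t^*\,(L^j_i)^*\,da^i_j(t)$ and applying the quantum It\^o formula $d(U^*U)=(dU^*)\,U+U^*\,(dU)+(dU^*)(dU)$, where the correction term is evaluated with the table above, I would collect the operator coefficient of each differential $da^\alpha_\delta$. A short computation gives that this coefficient equals $U_t^*\big[L^\alpha_\delta+(L^\delta_\alpha)^*+\sum_{\gamma\in\Lambda}(L^\delta_\gamma)^*L^\alpha_\gamma\big]U_t$. The decisive --- and most delicate --- analytic point is that a quantum stochastic integral process which is identically zero must have a vanishing integrand; this injectivity again follows from the fundamental estimates. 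Since unitarity forces $U_t^*U_t=I_\Psi$ for all $t$, hence $d(U_t^*U_t)=0$, and since $U_t$ is invertible, this yields the algebraic system
\[
L^\alpha_\delta+(L^\delta_\alpha)^*+\sum_{\gamma\in\Lambda}(L^\delta_\gamma)^*\,L^\alpha_\gamma=0 ,\qquad \alpha,\delta\in\Lambda\cup\{0\},
\]
together with the companion system produced in the same way by $d(U_tU_t^*)=0$.

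It then remains to read this system block by block. Introducing $S^i_j:=L^i_j+\delta_{i,j}I_\Hcal$ for $i,j\in\Lambda$, which is exactly the content of the third line of (iii), the block $\alpha,\delta\in\Lambda$ collapses after the obvious cancellations to $\sum_{\gamma\in\Lambda}(S^\delta_\gamma)^*S^\alpha_\gamma=\delta_{\alpha,\delta}I_\Hcal$, i.e. $\Sbb^*\Sbb=I$, while the companion system gives $\Sbb\Sbb^*=I$; together these say that $\Sbb$ is unitary, which is (ii). It is precisely here that both relations $U^*U=I$ and $UU^*=I$ are needed, since $\Kcal$ is finite dimensional but $\Hcal$ need not be, so an isometry is not automatically unitary. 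The mixed block $\alpha\in\Lambda,\ \delta=0$ reduces to $L^\alpha_0=-\sum_{j\in\Lambda}(L^0_j)^*\,S^\alpha_j$, which is the second line of (iii). Finally the block $\alpha=\delta=0$ yields $L^0_0+(L^0_0)^*=-\sum_{k\in\Lambda}(L^0_k)^*L^0_k$; defining $H:=i\big(L^0_0+\tfrac12\sum_{k}(L^0_k)^*L^0_k\big)$ recasts this identity as $H=H^*$, giving simultaneously (i) and the first line of (iii).

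The converse is the same computation read backwards: assuming (i)--(iii), I would substitute the prescribed forms of the $L^i_j$ into both coefficient systems and verify, using $\Sbb$ unitary and $H=H^*$, that every coefficient of $d(U_t^*U_t)$ and of $d(U_tU_t^*)$ vanishes; combined with $U_0=I_\Psi$ and the uniqueness from the first part, this forces $U_t^*U_t=U_tU_t^*=I_\Psi$ for all $t$. I expect the two genuine difficulties to be, first, handling the quantum It\^o table correctly in this crossed-index convention so that the cancellations producing $\Sbb^*\Sbb$ and the dissipative term $-\tfrac12\sum (L^0_k)^*L^0_k$ come out with the right placement of adjoints; and second, the rigorous justification of the term-by-term It\^o manipulation and of the passage from a vanishing integral process to a vanishing integrand, which is where the analytic core of the theory --- boundedness, adaptedness and the estimates on exponential vectors --- is actually used.
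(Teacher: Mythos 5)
The paper itself gives no proof of Theorem \ref{theoHPeq}: it is quoted from Hudson and Parthasarathy \cite{H-P} (see also \cite{Mey2}), so there is no internal argument to compare yours against. Your reconstruction is the standard Hudson--Parthasarathy proof and is essentially correct: your It\^o table $da^\alpha_\beta\,da^\mu_\nu=\hat\delta^\alpha_\nu\,da^\mu_\beta$ is the right one for this paper's index convention (it is exactly the contraction used in the paper's proof of Theorem \ref{theocomnoisealgebra}); the coefficient of $da^\alpha_\delta$ in $d(U_t^*U_t)$ is indeed $U_t^*\bigl[L^\alpha_\delta+(L^\delta_\alpha)^*+\sum_{\gamma\in\Lambda}(L^\delta_\gamma)^*L^\alpha_\gamma\bigr]U_t$; and the block-by-block reading with $S^i_j:=L^i_j+\delta_{i,j}I_\Hcal$ and $H:=i\bigl(L^0_0+\tfrac12\sum_k(L^0_k)^*L^0_k\bigr)$ does yield precisely (i)--(iii), the relation from $U^*U=I$ giving $\Sbb^*\Sbb=I$ and the companion one giving $\Sbb\Sbb^*=I$. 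Your observation that both are indispensable because $\Hcal$ may be infinite dimensional is exactly the right subtlety.

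Two caveats on points where your sketch is looser than the actual argument. First, the Picard iterates do not converge in operator norm; the fundamental estimates give strong convergence on the exponential domain, uniformly on compact time intervals, and boundedness (contractivity) of the limit is a separate step. Second, in the converse direction your claim that ``every coefficient of $d(U_tU_t^*)$ vanishes'' is not literally true: writing $Z_t=U_tU_t^*$, that coefficient is $L^\alpha_\delta Z_t+Z_t(L^\delta_\alpha)^*+\sum_{\gamma\in\Lambda}L^\gamma_\delta Z_t(L^\gamma_\alpha)^*$, which is not identically zero under (i)--(iii) --- unlike the $U^*U$ case, where the conditions annihilate the bracket sandwiched between $U^*$ and $U$. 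The correct statement is that the constant process $Z\equiv I$ solves this quantum stochastic differential equation with the same initial condition, and one then concludes by uniqueness; but the equation for $Z$ has two-sided coefficients, so the uniqueness needed is a (standard, Picard-type) extension of the left-coefficient uniqueness you proved for \eqref{eqHPeq}, not an immediate citation of it. Since you do invoke uniqueness at that point, this is a gap of precision rather than of substance.
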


\subsection{Change of noises}

Usually the orthonormal basis $(e_i)_{i\in\Lambda}$ is given by the context and one does not change it, once it is fixed. It is clear that the choice of this basis determines the coefficients taking part in Equation \eqref{eqHPeq}. As an example, consider again the quantum Langevin equation \eqref{eqHPintro} given in the introduction, with $S=I_\Hcal$:
\begin{equation*}
dU_t=-\left(iH+\frac{1}{2}L^*L\right)U_t\,dt+L\,U_t\,da^0_1(t)-L^*\,U_t\,da^1_0(t)\,.
\end{equation*}
Here $K=\C e_1$ is one dimensional; one would think that the choice of the basis, i.e. the choice of the unit vector $e_1\in\Kcal$ is not important. However, suppose that ${L}^*=\lambda\,L$ for some $\lambda\in\C$, $\vert\lambda\vert=1$ (this situation may happen whenever the Noise Algebra is commutative, as we shall see in Section \ref{commEnv}), the previous equation becomes
\begin{equation}\label{eitheta}
dU_t=-\left(iH+\frac{1}{2}L^2\right)\,U_t\,dt+L\,U_t\,da^0_1(t)-\lambda\,L\,U_t\,da^1_0(t)
\end{equation}
With this choice of a basis it is not clear that the equation is actually driven by a classical noise. However, take $\mu\in \C$ such that $\mu^2=-\lambda$. If one takes as a new basis the vector $f_1=\overline{\mu}\,e_1$, then 
$$
da^0_1(t)=dA^\dagger\left(\vert e_1\rangle\,\indic_{[0,t]}\right)=dA^\dagger\left(\vert \mu\,f_1\rangle\,\indic_{[0,t]}\right)=\mu\,dA^\dagger\left(\vert f_1\rangle\,\indic_{[0,t]}\right)=\mu\,d\tilde{a}^0_1(t)\,.
$$
On the other hand, as $da^1_0(t)$ is the adjoint of $da^0_1(t)$, we get
$$
da^1_0(t)=\overline{\mu}\,d\tilde{a}^1_0(t)\,.
$$
Hence Equation \eqref{eitheta} becomes
$$
dU_t=-\left(iH+\frac{1}{2}L^2\right)\,U_t\,dt+\mu\,L\,U_t\,\left(d\tilde a^0_1(t)+\,d\tilde a^1_0(t)\right)\,.
$$
We recognize a usual Brownian motion driven quantum Langevin equation
$$
dU_t=-\left(iH+\frac{1}{2}\tilde L^2\right)\,U_t\,dt+\tilde{L}\,U_t\,dB_t,
$$
where $B_t=\tilde{a}^0_1(t)+\tilde{a}^1_0(t)$ is unitarily isomorphic to the multiplication operator by a real Brownian motion and where we put $\tilde{L}=\mu L$, so that ${\tilde{L}}^*=-\tilde L$. 

It is now clear that in order to unravel classical noises in quantum Langevin equations we must allow changes of basis in $\Kcal$. We shall call such a transformation a \emph{change of noises}.

In order to make the following more readable we fix the following notations. 
Consider a quantum Langevin equation on $\Psi$ of the form
\begin{equation*}
dU_t=L^0_0\, U_t\, dt+\sum_{i=1}^d L^0_i\, U_t\, da^0_i(t)+\sum_{i=1}^d L^i_0\,U_t\, da^i_0(t)+\sum_{i,j=1}^d L^i_j U_t\, da^i_j(t)\,,
\end{equation*}
It will be convenient in the sequel to consider the coefficients $L^0_i$ as a column vector
$$
L^0=\begin{pmatrix} L^0_1\\\vdots\\L^0_d\end{pmatrix}\,,
$$
the coefficients $L^i_0$ as a row vector
$$
L_0=\begin{pmatrix} L^1_0&\ldots&L^d_0\end{pmatrix}
$$
and the coefficients $L^i_j$ as a $d\times d$-block-matrix $\Lbb$ such that $\Lbb_{ij}=L^j_i$.

Note that, consistently with these notations, we have
$$
\left(L^0\right)^*=\begin{pmatrix} \left(L^0_1\right)^*&\ldots&\left(L^0_d\right)^*\end{pmatrix}\,.
$$

\begin{prop}\label{change}
Consider a quantum Langevin equation on $\Psi$ of the form
\begin{equation}\label{QLE1}
dU_t=L^0_0\, U_t\, dt+\sum_{i=1}^d L^0_i\, U_t\, da^0_i(t)+\sum_{i=1}^d L^i_0\,U_t\, da^i_0(t)+\sum_{i,j=1}^d L^i_j U_t\, da^i_j(t)\,,
\end{equation}
where the quantum noises $a^i_j$ are associated to a given orthonormal basis $(e_i)_{i\in\Lambda}$ of $\Kcal$. 

In the orthonormal basis $(f_i)_{i\in\Lambda}$ of $\Kcal$, given by $f_i=We_i$, $i\in\Lambda$, for some unitary operator $W$ on $\Kcal$, Equation \eqref{QLE1} becomes
\begin{equation}\label{QLE2}
dU_t=L^0_0\, U_t\, dt+\sum_{i=1}^d \tilde L^0_i\, U_t\, d\tilde a^0_i(t)+\sum_{i=1}^d \tilde L^i_0\,U_t\, d\tilde a^i_0(t)+\sum_{i,j=1}^d \tilde L^i_j U_t\, d\tilde a^i_j(t)\,,
\end{equation}
where
\begin{align*}
\tilde L^0&=W^*\, L^0\,,\\
\tilde L_0&=L_0\, W\,,\\
\tilde \Lbb&=W^*\, \Lbb\, W\,.
\end{align*}
\end{prop}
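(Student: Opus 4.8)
The statement is essentially a bookkeeping identity: changing the orthonormal basis of $\Kcal$ from $(e_i)$ to $(f_i)=(We_i)$ replaces each noise $a^i_j(t)$ by a linear combination of the noises $\tilde a^k_l(t)$ attached to the new basis, and the formulas for $\tilde L^0$, $\tilde L_0$ and $\tilde\Lbb$ will fall out by substituting these combinations into Equation \eqref{QLE1} and collecting the coefficient of each $d\tilde a^k_l(t)$. The plan is therefore to (i) express the old basis vectors in the new basis, (ii) transport this through the definitions of $A^\dagger$, $A$ and $d\Gamma$ to obtain the transformation law of each of the three families of noises, and (iii) reassemble the equation.

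For step (i), unitarity of $W$ gives $e_i=W^*f_i=\sum_k (W^*)_{ki}\,f_k$. Step (ii) then splits into three cases according to how each noise depends on the basis vectors. Since $A^\dagger$ is linear in its argument, the creation noises obey $a^0_i(t)=\sum_k (W^*)_{ki}\,\tilde a^0_k(t)$. Taking adjoints and using $a^i_0=(a^0_i)^*$, the conjugation turns $W^*$ into $W$ and the annihilation noises obey $a^i_0(t)=\sum_k W_{ik}\,\tilde a^k_0(t)$. For the number noises one expands the rank-one operator as $\vert e_j\rangle\langle e_i\vert=\sum_{k,l}(W^*)_{kj}\,W_{il}\,\vert f_k\rangle\langle f_l\vert$ (linear in the ket, antilinear in the bra) and invokes linearity of $d\Gamma$ to obtain $a^i_j(t)=\sum_{k,l}(W^*)_{kj}\,W_{il}\,\tilde a^l_k(t)$.

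Step (iii) is the substitution. The $dt$-term is untouched, since $L^0_0$ multiplies a basis-independent differential. Plugging the three transformation laws into the corresponding sums of \eqref{QLE1} and collecting the coefficients of $d\tilde a^0_k$, of $d\tilde a^k_0$ and of $d\tilde a^l_k$, one reads off $\tilde L^0_k=\sum_i (W^*)_{ki}L^0_i$, $\tilde L^k_0=\sum_i L^i_0 W_{ik}$ and $\tilde L^l_k=\sum_{i,j}(W^*)_{kj}L^i_j W_{il}$. With the column/row/block-matrix conventions $\Lbb_{ij}=L^j_i$, these three families of scalar identities are precisely the matrix identities $\tilde L^0=W^* L^0$, $\tilde L_0=L_0 W$ and $\tilde\Lbb=W^*\Lbb W$. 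The one place where care is genuinely needed---and the only spot where a conjugate could go astray---is the systematic tracking of which noises depend linearly and which antilinearly on the basis vectors, because it is exactly this asymmetry that produces a $W^*$ on the left of $L^0$, a $W$ on the right of $L_0$, and the two-sided conjugation $W^*(\,\cdot\,)W$ on $\Lbb$.
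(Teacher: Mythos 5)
Your proposal is correct and takes essentially the same route as the paper's proof: write $e_i=\sum_k (W^*)_{ki}f_k$, deduce the transformation laws of the three families of noises (keeping track of linearity in the ket versus antilinearity in the bra), then substitute into the equation and collect the coefficients of each $d\tilde a^k_l(t)$. All your index computations and the resulting matrix identities $\tilde L^0=W^*L^0$, $\tilde L_0=L_0 W$, $\tilde \Lbb=W^*\,\Lbb\,W$ agree with the paper's.
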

\begin{proof}
We have $e_i=\sum_{j\in\Lambda} (W^*)_{ji}\, f_j$, so that 
\begin{align*}
da^0_i(t)&=\sum_{j\in\Lambda} (W^*)_{ji}\,d\tilde a^0_j(t)\\
da^i_0(t)&=\sum_{j\in\Lambda} W_{ij}\, d\tilde a^j_0(t)\\
da^i_j(t)&=\sum_{k,l\in\Lambda} W_{ik}\,(W^*)_{lj}\, d\tilde a^k_l(t)\,.
\end{align*}
This gives
\begin{multline*}
dU_t=L^0_0\, U_t\, dt+\sum_{j=1}^d\left(\sum_{i=1}^d (W^*)_{ji}\,L^0_i\right)\, U_t\, d\tilde a^0_j(t)+\sum_{j=1}^d\left(\sum_{i=1}^d L^i_0 W_{ij}\right)\,U_t\, d\tilde a^i_0(t)+\hfill\\
\hfill+\sum_{k,l=1}^d\left(\sum_{i,j=1}^d (W^*)_{lj}\, \Lbb_{ji}\, W_{ik}\right)\, U_t\, d\tilde a^k_l(t)\,.
\end{multline*}
This gives the result.
\end{proof}

\smallskip
Regarding the case of unitary-valued quantum Langevin equations, the proposition above shows that the conditions on the $L^i_j$'s are not affected by changes of noise, as is summarized below. 

\begin{prop}
Consider a unitary-valued quantum Langevin equation of the form
\begin{multline}\label{eqHPequnitayform}
dU_t=-\left(i\,H+\frac{1}{2}\sum_{k\in\Lambda}{\left(L^0_k\right)^*\,L^0_k}\right)\,U_t\,dt+\sum_{k\in\Lambda}{L^0_k\,U_t\,da^0_k(t)}\hfill\\
\hfill +\sum_{k\in\Lambda}{\left(-\sum_{l\in\Lambda}{\left(L^0_l\right)^*{S^k_l}}\right)\,U_t\,da^k_0(t)}+\sum_{k,l\in\Lambda}{\left(S^k_l-\delta_{k,l}\,I_\Hcal\right)\,U_t\,da^k_l(t)}\,,
\end{multline}
where $H$ is selfadjoint and the operator $\Sbb=\sum_{i,j\in\Lambda}{S^i_j\otimes\proj{j}{i}}$ on $\Bcal(\Hcal\otimes\Kcal)$ is unitary. Then, after a change of noise of the form $f_i=We_i$, $i=1,\ldots, d$, the equation becomes
\begin{multline}\label{eqHPequnitayform2}
dU_t=-\left(i\,H+\frac{1}{2}\sum_{k\in\Lambda}{ \left(\tilde L^0_k\right)^*\,\tilde L^0_k}\right)\,U_t\,dt+\sum_{k\in\Lambda}{\tilde L^0_k\,U_t\,d\tilde a^0_k(t)}\hfill\\
\hfill +\sum_{k\in\Lambda}{\left(-\sum_{l\in\Lambda}{\left(\tilde L^0_l\right)^*\,{\tilde S^k_l}}\right)\,U_t\,d\tilde a^k_0(t)}+\sum_{k,l\in\Lambda}{(\tilde S^k_l-\delta_{k,l}\,I_\Hcal)\,U_t\,d\tilde a^k_l(t)}\,,
\end{multline}
where 
\begin{align*}
\tilde L^0&=W^*\, L^0\\
\tilde \Sbb&=W^* \, S\, W\,.
\end{align*}
\end{prop}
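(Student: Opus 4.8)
The plan is to deduce this proposition directly from Proposition \ref{change}, by checking that each transformed coefficient it produces is again of the unitary-valued form dictated by Theorem \ref{theoHPeq}, and that the selfadjoint operator $H$ is left unchanged. The first step is to identify the coefficients of \eqref{eqHPequnitayform} with the data $(L^0,L_0,\Lbb)$ of Proposition \ref{change}. Viewing $\Sbb$ as the $d\times d$ block matrix with entries $\Sbb_{ij}=S^j_i$ (so that $W^*\,\Sbb\, W$ means the conjugation on the $\Kcal$-factor, $(I_\Hcal\otimes W^*)\,\Sbb\,(I_\Hcal\otimes W)$), the number coefficients read $\Lbb=\Sbb-I$, the drift is $L^0_0=-iH-\frac12\,(L^0)^*L^0$, and the annihilation row satisfies $L_0=-(L^0)^*\,\Sbb$.

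Next I would apply Proposition \ref{change}, which yields $\tilde L^0=W^*\,L^0$, $\tilde L_0=L_0\,W$ and $\tilde\Lbb=W^*\,\Lbb\,W$, and translate each of these back into the unitary parametrization. Setting $\tilde\Sbb=W^*\,\Sbb\,W$, the identity $\tilde\Lbb=W^*(\Sbb-I)W=\tilde\Sbb-I$ shows that the number coefficients are $\tilde L^i_j=\tilde S^i_j-\delta_{i,j}\,I_\Hcal$, and unitarity of $W$ makes $\tilde\Sbb$ unitary, as required. For the drift I would use $L^0=W\,\tilde L^0$ together with $W^*W=I_\Kcal$ to obtain $\sum_{k}(L^0_k)^*L^0_k=\sum_k(\tilde L^0_k)^*\tilde L^0_k$; since Proposition \ref{change} leaves $L^0_0$ untouched, this exhibits it as $-iH-\frac12\sum_k(\tilde L^0_k)^*\tilde L^0_k$ with the \emph{same} selfadjoint $H$, matching \eqref{eqHPequnitayform2}.

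The one step requiring a short computation is the annihilation coefficients. Here I would substitute into $\tilde L_0=L_0\,W=-(L^0)^*\,\Sbb\,W$ and insert $I_\Kcal=WW^*$ between $(L^0)^*$ and $\Sbb$, giving $\tilde L_0=-(L^0)^*\,WW^*\,\Sbb\,W=-(\tilde L^0)^*\,\tilde\Sbb$, that is $\tilde L^i_0=-\sum_{l}(\tilde L^0_l)^*\,\tilde S^i_l$, precisely the form appearing in \eqref{eqHPequnitayform2}.

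I do not expect any genuine obstacle: the statement is a compatibility check, and its entire content is the cancellation $W^*W=WW^*=I_\Kcal$ reappearing in each term. The only place to be careful is the bookkeeping of the index convention $\Lbb_{ij}=L^j_i$ (and correspondingly $\Sbb_{ij}=S^j_i$), since a transposition there would spuriously turn the conjugation $W^*\,\cdot\,W$ into $W\,\cdot\,W^*$; keeping this convention fixed throughout is what guarantees that $\Sbb$ transforms as $\tilde\Sbb=W^*\,\Sbb\,W$ and not as its inverse.
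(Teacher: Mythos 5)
Your proposal is correct and is precisely the argument the paper intends: the proposition is stated as an immediate consequence of Proposition \ref{change}, and your verification (identifying $\Lbb=\Sbb-I$, $L_0=-(L^0)^*\,\Sbb$, then checking that the drift is invariant and that $\tilde L_0=-(\tilde L^0)^*\,\tilde\Sbb$ via inserting $WW^*=I_\Kcal$) fills in exactly the bookkeeping the paper leaves implicit. No gaps; the index convention caveat you flag is indeed the only place where one could go wrong.
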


\subsection{The Noise Algebra}

In a previous article on one-step evolutions \cite{Bar2}, I. Bardet defined the Environment Algebra as the Von Neumann subalgebra of the environment generated by the unitary operator of a one-step evolution on the bipartite system $\Hcal\otimes\Kcal$. We recall here the basic definitions and the main result on the decomposition of the environment between a maximal commutative and a quantum part.

Let $\Sbb$ be a unitary operator on $\Hcal\otimes\Kcal$. For $f,g\in\Hcal$, we define:
\begin{equation}
\Sbb(f,g)=\Tr_{\proj{g}{f}}[\Sbb]\,,\qquad\Sbb^*(f,g)=\Tr_{\proj{g}{f}}[\Sbb^*]\,.
\label{eqU(f,g)}
\end{equation}
Those operators can be thought of as pictures of $\Sbb$ taken from $\Kcal$ but with different angles.
\begin{de}\label{deenvironment}
Let $\Sbb$ be a unitary operator on $\Hcal\otimes\Kcal$. We call \emph{Environment Algebra} the von Neumann algebra $\Acal(\Sbb)$ generated by the $\Sbb(f,g)$, that is
\begin{equation}
\Acal(\Sbb)=\left\lbrace \Sbb(f,g),\ \Sbb^*(f,g);\quad f,g\in\Hcal\right\rbrace''.
\label{eqdeenvironmentalg}
\end{equation}
\end{de}

The point with this definition is that it fits with the following characterization. 

\begin{prop}\label{propenvironment}
Let $\Sbb$ be a unitary operator on $\Hcal\otimes\Kcal$. Then $\Acal(\Sbb)$ is the smallest von Neumann subalgebra of $\Bcal(\Kcal)$ such that $\Sbb$ and $\Sbb^*$ belong to $\Bcal(\Hcal)\otimes\Acal(\Sbb)$, i.e. if $\Acal$ is another von Neumann algebra such that $\Sbb$ and $\Sbb^*$ belong to $\Bcal(\Hcal)\otimes\Acal$, then $\Acal(\Sbb)\subset\Acal$. Furthermore, its commutant is given by
\begin{equation}\label{eqpropenvironment}
\Acal(\Sbb)'=\left\lbrace Y\in\Bcal(\Kcal),\quad [I_\Hcal\otimes Y\,,\,\Sbb]=[I_\Hcal\otimes Y\,,\,\Sbb^*]=0\right\rbrace,
\end{equation}
where the notation $[\cdot,\cdot]$ stands for the commutant of two bounded operators.
\end{prop}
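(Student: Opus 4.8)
The plan is to reduce both assertions to two facts about \emph{slice maps}. For $f,g\in\Hcal$ let $\omega_{f,g}$ denote the normal functional $X\mapsto\scal{f}{Xg}$ on $\Bcal(\Hcal)$ and let $\Phi_{f,g}=(\omega_{f,g}\otimes\mathrm{id})\colon\Bcal(\Hcal\otimes\Kcal)\to\Bcal(\Kcal)$ be the associated (normal, weak-$*$ continuous) slice map; a short computation on elementary tensors identifies $\Sbb(f,g)=\Phi_{f,g}(\Sbb)$ and $\Sbb^*(f,g)=\Phi_{f,g}(\Sbb^*)$. First I would record two properties of $\Phi_{f,g}$, each proved on elementary tensors and extended by linearity and weak-$*$ continuity. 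The \emph{module property}: $\Phi_{f,g}\big((I_\Hcal\otimes Y)X\big)=Y\,\Phi_{f,g}(X)$ and $\Phi_{f,g}\big(X(I_\Hcal\otimes Y)\big)=\Phi_{f,g}(X)\,Y$ for all $Y\in\Bcal(\Kcal)$, whence $[Y,\Phi_{f,g}(X)]=\Phi_{f,g}([I_\Hcal\otimes Y,X])$. The \emph{separation property}: since $\scal{\phi}{\Phi_{f,g}(Z)\psi}=\scal{f\otimes\phi}{Z\,(g\otimes\psi)}$, the vanishing of all $\Phi_{f,g}(Z)$ forces every matrix element of $Z$ to vanish, hence $Z=0$.

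These two facts immediately yield the commutant formula. Writing $\Dcal$ for the right-hand side of \eqref{eqpropenvironment}, the module property gives $[Y,\Sbb(f,g)]=\Phi_{f,g}([I_\Hcal\otimes Y,\Sbb])$ and likewise for $\Sbb^*$. Thus if $Y\in\Dcal$ then $Y$ commutes with every $\Sbb(f,g)$ and $\Sbb^*(f,g)$; conversely, if $Y$ commutes with all of them then $\Phi_{f,g}([I_\Hcal\otimes Y,\Sbb])=0$ for all $f,g$, so by separation $[I_\Hcal\otimes Y,\Sbb]=0$, and similarly for $\Sbb^*$, whence $Y\in\Dcal$. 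Hence $\Dcal$ is exactly the commutant of the generating set $\{\Sbb(f,g),\Sbb^*(f,g)\}$; since a set and the von Neumann algebra it generates share the same commutant, $\Acal(\Sbb)'=\Dcal$, which is the second claim.

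For the minimality statement I would invoke the commutation theorem for von Neumann tensor products. For any von Neumann subalgebra $\Acal\subset\Bcal(\Kcal)$ it gives $\big(\Bcal(\Hcal)\otimes\Acal\big)'=\Bcal(\Hcal)'\otimes\Acal'=\C I_\Hcal\otimes\Acal'=I_\Hcal\otimes\Acal'$. By the bicommutant theorem, $\Sbb,\Sbb^*\in\Bcal(\Hcal)\otimes\Acal$ if and only if they commute with this commutant, i.e. $[\,I_\Hcal\otimes Y,\Sbb\,]=[\,I_\Hcal\otimes Y,\Sbb^*\,]=0$ for all $Y\in\Acal'$, that is $\Acal'\subset\Dcal=\Acal(\Sbb)'$; taking commutants, this is equivalent to $\Acal(\Sbb)\subset\Acal$. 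Applying this with $\Acal=\Acal(\Sbb)$ shows that $\Sbb$ and $\Sbb^*$ do lie in $\Bcal(\Hcal)\otimes\Acal(\Sbb)$, and applying it to an arbitrary such $\Acal$ shows $\Acal(\Sbb)\subset\Acal$, establishing minimality.

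I expect the only genuine difficulty to be the infinite dimensionality of $\Hcal$: one must ensure the slice maps are genuinely normal so that the extension-by-continuity arguments behind the module and separation properties go through, and the final paragraph leans on the (nontrivial but standard) commutation theorem $(\Mcal\otimes\Ncal)'=\Mcal'\otimes\Ncal'$. Were $\Hcal$ finite dimensional, every step would reduce to elementary manipulations with the matrix entries $\Sbb_{kl}\in\Bcal(\Kcal)$ of $\Sbb$, and no operator-algebraic machinery would be needed.
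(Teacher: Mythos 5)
Your proposal is correct, but there is nothing in this paper to compare it against: the paper only \emph{recalls} Proposition \ref{propenvironment} from the cited work of Bardet (\cite{Bar2}) and gives no proof of it here. Your slice-map argument is a clean, self-contained way to establish both claims. The identification $\Sbb(f,g)=\Phi_{f,g}(\Sbb)$ with $\Phi_{f,g}=(\omega_{f,g}\otimes\mathrm{id})$ matches the paper's definition $\Tr_{\proj{g}{f}}[\Sbb]$, the module and separation properties are exactly what is needed, and the passage from the commutant formula to minimality via $(\Bcal(\Hcal)\otimes\Acal)'=I_\Hcal\otimes\Acal'$ and the bicommutant theorem is the natural route. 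Two small remarks. First, your appeal to the general commutation theorem $(\Mcal\otimes\Ncal)'=\Mcal'\otimes\Ncal'$ is heavier than needed: in this paper $\Kcal$ is \emph{finite} dimensional, so $\Bcal(\Hcal\otimes\Kcal)$ is just $d\times d$ block matrices over $\Bcal(\Hcal)$ and the identity $(\Bcal(\Hcal)\otimes\Acal)'=I_\Hcal\otimes\Acal'$ can be checked by hand on block entries; even for infinite-dimensional $\Kcal$ the case where one factor is a type I factor $\Bcal(\Hcal)$ is elementary and does not require Tomita--Takesaki. (Likewise the module and separation properties follow directly from the matrix-element identity $\scal{\phi}{\Phi_{f,g}(Z)\psi}=\scal{f\otimes\phi}{Z(g\otimes\psi)}$, with no density-plus-normality argument needed.) Second, when you invoke ``a set and the von Neumann algebra it generates share the same commutant,'' it is worth noting explicitly that the generating family is $*$-closed, which follows from the identity $\Sbb(f,g)^*=\Sbb^*(g,f)$; this is what makes $\{\Sbb(f,g),\Sbb^*(f,g)\}''$ a von Neumann algebra in the first place and licenses the triple-commutant step. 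Neither point is a gap, only a matter of calibrating the machinery to the setting.
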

We now give the definition of the commutative parts of the environment and the resulting decomposition between a maximal commutative part and a quantum part.
\begin{de}\label{ClassicalQuantumpart}
Let $\Sbb$ be a unitary operator on $\Hcal\otimes\Kcal$ and let $\tilde{\Kcal}$ be a subspace of $\Kcal$. We say that $\tilde{\Kcal}$ is a \emph{Commutative Subspace of the Environment} if $\tilde{\Kcal}\ne\{0\}$ and:

\smallskip\noindent
i) $\Hcal\otimes\tilde{\Kcal}$ and $\Hcal\otimes\tilde{\Kcal}^\perp$ are stable by $\Sbb$,

\smallskip\noindent
ii) $\Acal(\tilde{\Sbb})$ is commutative, where $\tilde{\Sbb}$ is the restriction of $\Sbb$ to $\Hcal\otimes\tilde{\Kcal}$\,.
\end{de}
We then have the following Decomposition Theorem, which is proved in \cite{Bar2}.
\begin{theo}\label{theodecompenv}
The environment Hilbert space $\Kcal$ is the orthogonal direct sum of two subspaces $\Kcal_c$ and $\Kcal_q$, such that either $\Kcal_c=\{0\}$ or

\smallskip\noindent
i) $\Kcal_c$ is a commutative subspace of the environment.

\smallskip\noindent
ii) If $\tilde{\Kcal}$ is any commutative subspace of the environment, then $\tilde{\Kcal}\subset\Kcal_c$.

\smallskip\noindent
iii) The restriction of $\Sbb$ to $\Hcal\otimes\Kcal_q$ does not have any commutative subspace.
\end{theo}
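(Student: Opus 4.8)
The plan is to translate the two geometric conditions defining a commutative subspace (Definition \ref{ClassicalQuantumpart}) into purely algebraic statements about the finite-dimensional von Neumann algebra $\Acal(\Sbb)\subseteq\Bcal(\Kcal)$, and then to read off the decomposition from the structure theorem for such algebras.

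First I would reformulate condition i). For a unitary $\Sbb$, a subspace $\Hcal\otimes\tilde\Kcal$ together with its orthogonal complement $\Hcal\otimes\tilde\Kcal^\perp$ are both stable under $\Sbb$ exactly when the projection $I_\Hcal\otimes P_{\tilde\Kcal}$ commutes with $\Sbb$, hence (taking adjoints) with $\Sbb^*$ as well. By the description of the commutant in Proposition \ref{propenvironment}, this is equivalent to $P_{\tilde\Kcal}\in\Acal(\Sbb)'$. Thus condition i) says precisely that $\tilde\Kcal$ is a reducing subspace whose projection lies in the commutant of the environment algebra.

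Next I would reformulate condition ii). The key computational lemma is that, as soon as $P_{\tilde\Kcal}\in\Acal(\Sbb)'$, the slice maps are compatible with restriction: writing $\Sbb(f,g)=(\langle g|\otimes I_\Kcal)\,\Sbb\,(|f\rangle\otimes I_\Kcal)$, one obtains $\tilde\Sbb(f,g)=P_{\tilde\Kcal}\,\Sbb(f,g)\,P_{\tilde\Kcal}$ on $\tilde\Kcal$, and since $P_{\tilde\Kcal}$ commutes with every element of $\Acal(\Sbb)$ this is the compression $\Sbb(f,g)|_{\tilde\Kcal}$. Consequently $\Acal(\tilde\Sbb)$ is exactly the reduced algebra $\Acal(\Sbb)_{P_{\tilde\Kcal}}$ obtained by restricting $\Acal(\Sbb)$ to $\tilde\Kcal$. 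I expect this identification — checking that the partial-trace/slice construction commutes with restriction to a reducing subspace — to be the main technical obstacle, as it is the only place where the hypothesis $P_{\tilde\Kcal}\in\Acal(\Sbb)'$ is genuinely used. With it in hand, a commutative subspace is exactly a nonzero subspace whose projection lies in $\Acal(\Sbb)'$ and whose reduced algebra is abelian.

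Finally I would invoke the structure theorem for finite-dimensional von Neumann algebras. Write $\Kcal=\bigoplus_k\Hcal_k\otimes\Kcal_k$ so that $\Acal(\Sbb)=\bigoplus_k\Bcal(\Hcal_k)\otimes I_{\Kcal_k}$ and $\Acal(\Sbb)'=\bigoplus_k I_{\Hcal_k}\otimes\Bcal(\Kcal_k)$. A projection in the commutant is of the form $\bigoplus_k I_{\Hcal_k}\otimes Q_k$, and its reduced algebra $\bigoplus_{k:Q_k\neq0}\Bcal(\Hcal_k)\otimes I_{Q_k\Kcal_k}$ is abelian iff $\dim\Hcal_k=1$ for every $k$ with $Q_k\neq0$. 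Accordingly I would set $\Kcal_c=\bigoplus_{k:\dim\Hcal_k=1}\Hcal_k\otimes\Kcal_k$ and $\Kcal_q=\Kcal_c^\perp$. The associated projection $P_c$ is central, which yields invariance of $\Hcal\otimes\Kcal_c$ and $\Hcal\otimes\Kcal_q$ (property i)); any commutative subspace has $Q_k=0$ on every block with $\dim\Hcal_k>1$, so its projection is dominated by $P_c$ and it is contained in $\Kcal_c$ (property ii)); and since $\Kcal_q$ reduces $\Sbb$, the environment algebra of the restriction $\Sbb_q$ of $\Sbb$ to $\Hcal\otimes\Kcal_q$ is the reduction of $\Acal(\Sbb)$ to $\Kcal_q$, retaining only blocks with $\dim\Hcal_k>1$, so it admits no abelian reduced subalgebra, i.e. $\Sbb_q$ has no commutative subspace (property iii)). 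If no multiplicity-one block occurs, then $\Kcal_c=\{0\}$, which covers the remaining alternative.
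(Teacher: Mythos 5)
Your proof is correct, but note that this paper never actually proves Theorem \ref{theodecompenv}: the result is quoted from \cite{Bar2}, and the only argument of this type written out in the paper is the proof of its continuous-time analogue, Theorem \ref{theodecompcont}. Compared with that argument, your route is genuinely different and in one respect more complete. The paper's argument considers the family of projections $P\in\Acal(\Sbb)'$ whose ranges are commutative subspaces, extracts a maximal element $P_c$ using finite dimensionality (Zorn's lemma), and sets $\Kcal_c=P_c\Kcal$; but maximality in a partially ordered set only dominates chains, so the crucial assertion ii) --- that \emph{every} commutative subspace sits inside $\Kcal_c$ --- still requires knowing that the join of two commutative projections is again commutative, a point that proof leaves implicit. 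Your structure-theory argument supplies exactly this ingredient: once Definition \ref{ClassicalQuantumpart} is translated into ``$P_{\tilde\Kcal}\in\Acal(\Sbb)'$ and the reduced algebra $\Acal(\Sbb)P_{\tilde\Kcal}$ is abelian'' (via Proposition \ref{propenvironment} and your slice-map lemma, both of which are correct, the latter because $P_{\tilde\Kcal}\in\Acal(\Sbb)'$ makes the compression multiplicative), the decomposition $\Acal(\Sbb)=\bigoplus_k\Bcal(\Hcal_k)\otimes I_{\Kcal_k}$ forces any commutative projection to be supported in the blocks with $\dim\Hcal_k=1$, hence dominated by the central projection onto $\Kcal_c=\bigoplus_{k\,:\,\dim\Hcal_k=1}\Hcal_k\otimes\Kcal_k$; property iii) then follows blockwise from the standard identity $(\Mcal_P)'=(\Mcal')_P$. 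What the Zorn-type argument buys is brevity and independence from the representation theory of finite-dimensional von Neumann algebras; what yours buys is an explicit identification of $\Kcal_c$ (rather than a pure existence statement) and a transparent proof of the strong maximality ii). If you want an elementary substitute for the structure theorem at the one place ii) needs it: $\Acal(\Sbb)P$ is abelian iff $[A,B]P=0$ for all $A,B\in\Acal(\Sbb)$, and since $[A,B]$ vanishes on the ranges of two commutative projections it vanishes on the closed span of those ranges, so the set of commutative projections is closed under joins.
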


\smallskip
We now come back to our continuous time scenario. In this situation, we define the \emph{Noise Algebra} as an algebra which encodes the structure of the noise in the unitary quantum Langevin equation.

\begin{de}\label{deenvironmentalgcont}
Let $(U_t)$ be the unitary-valued solution of a quantum Langevin Equation. The \emph{Noise Algebra} (at time $t$) is defined by
\begin{equation}\label{eqdeenvironmentalgcont}
\Acal_t(U)=\left\lbrace\ \Tr_{\proj{f}{g}}\left[U_s\right],\ \Tr_{\proj{f}{g}}\left[U_s^*\right];\quad f,g\in\Hcal,\quad 0<s\leq t\right\rbrace''\,.
\end{equation}
\end{de}

\noindent It is obvious that it is enough to consider only the vectors of an orthonormal basis of $\Hcal$ in this definition. Let $(g_i)_{i\in I}$ be such a basis. For simplicity we adopt the following notation: if $T$ is a bounded operator on $\Psi$, we write $T^{ij}$ for $\Tr_{\proj{f_i}{f_j}}\left[T\right]$, $i,j\in I$.

\section{The case of a Commutative Environment}\label{commEnv}

The aim of this section is to completely characterize those unitary quantum Langevin equations for which $\Acal_t(U)$ is commutative. We do that in subsection \ref{sect21}. The characterization is first algebraic, then interpreted in terms of classical noises. Finally, we apply this characterization to give the general form of the associated Lindblad generator in subsection \ref{sect22}. 

\subsection{Characterization of Commutative Noise Algebras}\label{sect21}

We shall need the following notations. If the matrix $\Sbb$, as a block-matrix on $\Kcal$, is diagonalizable in some orthonormal basis of $\Kcal$, we put $\Kcal_\Wbb$ to be the maximal subspace of $\Kcal$ such that $\Sbb$ acts as the identity operator on $\Hcal\otimes\Kcal_\Wbb$ and we put $\Kcal_\Pbb=\Kcal_\Wbb^\perp$. We shall see that whenever the environment is commutative, $\Kcal_\Wbb$ (resp. $\Kcal_\Pbb$) corresponds to the part where the noise is a Brownian process (resp. a Poisson process). We consider some  orthonormal basis $\{f_1,\ldots,f_{\tilde d}, f_{\tilde d+1},\ldots, f_d\}$ of $\Kcal$ adapted to the decomposition $\Kcal=\Kcal_\Wbb\oplus\Kcal_\Pbb$, where $m$ is the dimension of $\Kcal_\Wbb$. In this basis, the matrix $\Sbb$ can then be written as
\begin{equation}\label{decompunit}
\Sbb=\begin{pmatrix}
I_{\Hcal\otimes\Kcal_\Wbb} & 0 \\
0 & \Sbb_\Prob
\end{pmatrix},\quad \Sbb_\Prob=\begin{pmatrix}
S_1 & 0 & \cdots & 0 \\
0 & S_2 & \ddots & \vdots\\
\vdots & \ddots & \ddots & 0 \\
0 & \cdots & 0 & S_{d-m}
\end{pmatrix}\,.
\end{equation}
 We shall denote by $\Lambda_\Wbb$ the set of indices $\{1, \ldots, \tilde d\}$ and by $\Lambda_\Pbb$ the other one. 
 
\smallskip
We can now state the first main result of this article, which completely characterizes the commutativity of the algebra $\Acal_t(U)$ in terms of algebraic properties of the coefficients $L^i_j$. 

\begin{theo}\label{theocomnoisealgebra}
Consider a unitary-valued quantum Langevin equation of the form
\begin{multline}\label{eqHPequnitaryform}
dU_t=-\left(i\,H+\frac{1}{2}\sum_{k\in\Lambda}{(L^0_k)^*\,L^0_k}\right)\,U_t\,dt+\sum_{k\in\Lambda}{L^0_k\,U_t\,da^0_k(t)}\hfill\\
\hfill +\sum_{k\in\Lambda}{\left(-\sum_{l\in\Lambda}{(L^0_l)^*\,{S^k_l}}\right)\,U_t\,da^k_0(t)}+\sum_{k,l\in\Lambda}{\left(S^k_l-\delta_{k,l}\,I_\Hcal\right)\,U_t\,da^k_l(t)}\,,
\end{multline}
where $H$ is selfadjoint and the operator $\Sbb=\sum_{i,j\in\Lambda}{S^i_j\otimes\proj{j}{i}}$ on $\Bcal(\Hcal\otimes\Kcal)$ is unitary. Then the following assertions are equivalent. 

\smallskip\noindent
1) The algebra $\Acal_t(U)$ is commutative for all $t>0$.

\smallskip\noindent
2) The algebra $\Acal_t(U)$ is commutative for some $t>0$.

\smallskip\noindent 
3) The matrix $\Sbb$, as a block-matrix on $\Kcal$, is diagonalizable in some orthonormal basis of $\Kcal$ (which is equivalent to $\Acal(\Sbb)$ commutative) and, considering the coefficients $L^i_j$ after the appropriate change of noise, we have that

\smallskip
i) there exists a symmetric unitary operator $W$ on $\Kcal_\Wbb$ such that
\begin{equation}\label{LWL}
\begin{pmatrix}
\left(L^0_1\right)^*\\\vdots\\\left(L^0_m\right)^*
\end{pmatrix}
=W\begin{pmatrix}
L^0_1\\\vdots\\L^0_m
\end{pmatrix}, 
\end{equation}

\smallskip
ii) for all $i\in\Lambda_\Prob$, there exists $\lambda_i\in\C$ such that
\begin{equation}\label{eqtheocomalg1}
L^0_i=\lambda_i(S_i-I_\Hcal)\,.
\end{equation}

\smallskip\noindent
4) There exists a change of noise such that Equation \eqref{eqHPequnitaryform} is of the form
\begin{equation}\label{Ubruits}
dU_t=A_0\,U_t\, dt+\sum_{i\in\Lambda_\Wbb} A_i\, U_t\, dW^i_t+\sum_{i\in\Lambda_\Pbb} B_i\, U_t\, dX^i_t\,,
\end{equation}
where $A_0$, $\{A_i,\,i\in\Lambda_\Wbb\}$ and $\{B_i,\,i\in \Lambda_\Pbb\}$ are bounded operators on $\Hcal$, where $W^i$, $i\in\Lambda_\Wbb$, are (multiplication operators by) standard Brownian motions, where $X^i$, $i\in\Lambda_\Pbb$, are (multiplication operators by) compensated Poisson processes and such that all the processes $W^i$ and $X^j$ are independent. 
\end{theo}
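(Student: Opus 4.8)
The plan is to establish the cycle of implications $(1)\Rightarrow(2)\Rightarrow(3)\Rightarrow(4)\Rightarrow(1)$; the implication $(1)\Rightarrow(2)$ is trivial, and closing the cycle automatically upgrades ``commutative for some $t$'' into ``commutative for all $t$''. A preliminary remark organizes everything: the solution $U_t$ is a single operator on $\Psi$ and $\Acal_t(U)$ is defined intrinsically from it, so a change of noise (a unitary change of basis $f_i=We_i$ of $\Kcal$, lifted to $\Phi$ by second quantization) leaves $U_t$, and hence the commutativity of $\Acal_t(U)$, unchanged; it merely rewrites the coefficients according to Proposition \ref{change}. Thus I am free to pass to the basis diagonalizing $\Sbb$ whenever convenient, so that the ``appropriate change of noise'' in $(3)$ and the ``there exists a change of noise'' in $(4)$ cost nothing.

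For the easy arc $(3)\Rightarrow(4)\Rightarrow(1)$: given $(3)$, I diagonalize $\Sbb$ and split $\Kcal=\Kcal_\Wbb\oplus\Kcal_\Pbb$ as in \eqref{decompunit}. On $\Kcal_\Pbb$, $\Sbb$ is block diagonal with blocks $S_i\ne I_\Hcal$ and \eqref{eqtheocomalg1} gives $L^0_i=\lambda_i(S_i-I_\Hcal)$; the coefficient formulas of Theorem \ref{theoHPeq} then yield $L^i_0=\overline{\lambda_i}(S_i-I_\Hcal)$, so after absorbing the phase of $\lambda_i$ into $f_i$ the three noises $da^0_i,da^i_0,da^i_i$ assemble, exactly as in \eqref{eqHPintrobruitpoisson}, into a single compensated Poisson process $X^i$ with coefficient $B_i=\lambda_i(S_i-I_\Hcal)$. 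On $\Kcal_\Wbb$, where $\Sbb$ acts as the identity, only the $da^0_k$ and $da^k_0$ terms survive; condition \eqref{LWL} asserts that $L^0=(L^0_1,\dots,L^0_m)$ is carried (after conjugation) by a \emph{symmetric} unitary $W$, and by the Autonne--Takagi factorization $W=V\,V^{\mathrm{t}}$ for some unitary $V$, so a further change of noise on $\Kcal_\Wbb$ brings each coefficient to the skew form $(\tilde L^0_k)^*=-\tilde L^0_k$, whereupon $da^0_k+da^k_0$ becomes a Brownian motion $W^k$. Independence of the resulting $W^k$ and $X^i$ follows from the tensor factorization of $\Phi$ along the orthogonal basis directions. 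The one degenerate situation, $\lambda_i=0$ (i.e.\ $L^0_i=0$ but $S_i\ne I_\Hcal$), leaves a pure exchange term $da^i_i$; as noted in Subsection \ref{notations} this process is deterministic (of law $\delta_0$) in the vacuum and has no effect on the small system, so it may be dropped and $(4)$ holds. For $(4)\Rightarrow(1)$: the classical processes $\{W^k_s,X^i_s:\ 0<s\le t\}$ form a commuting family of self-adjoint multiplication operators generating a commutative von Neumann algebra $\Acal_{\mathrm{cl}}\subset\Bcal(\Phi)$; since $U_s$ is obtained by iterated stochastic integration against these processes it lies in $\Bcal(\Hcal)\otimes\Acal_{\mathrm{cl}}$, so every $\Tr_{\proj{f}{g}}[U_s]$ lies in $\Acal_{\mathrm{cl}}$, giving $\Acal_t(U)\subset\Acal_{\mathrm{cl}}$ commutative.

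The substantial implication is $(2)\Rightarrow(3)$, and I would split it into two parts mirroring the two halves of $(3)$. The main tool is the continuous-time analogue of Proposition \ref{propenvironment}, proved by the same argument: $\Acal_t(U)$ is the smallest von Neumann algebra with $U_s\in\Bcal(\Hcal)\otimes\Acal_t(U)$ for all $s\le t$. First I recover the scattering data: applying the quantum It\^o formula to the products $U_s^{ij}\,(U_s^{kl})^*$ and to the commutators $[U_s^{ij},(U_s^{kl})^*]$ and forcing them to vanish (as required by commutativity of $\Acal_t(U)$), the It\^o correction terms are sums over the noise index $\Lambda$ of products of the matrix elements of the exchange coefficients $S^a_b-\delta_{ab}I_\Hcal$; reading off the coefficient of $da^k_l$ identifies inside $\Acal_t(U)$ a copy of the one-step Environment Algebra $\Acal(\Sbb)$, so commutativity of $\Acal_t(U)$ forces $\Acal(\Sbb)$ commutative, hence by Proposition \ref{propenvironment} and Theorem \ref{theodecompenv} that $\Sbb$ is diagonalizable in an orthonormal basis. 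After passing to that basis I treat the $da^0_k$-sector: the remaining It\^o corrections, contracted over $\Lambda$, must also vanish, and separating the indices in $\Lambda_\Wbb$ (where $\Sbb=I_\Hcal$) from those in $\Lambda_\Pbb$ yields precisely the two families of relations --- the symmetric-unitary constraint \eqref{LWL} on $\Kcal_\Wbb$ and the proportionality \eqref{eqtheocomalg1} on each block of $\Kcal_\Pbb$.

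The hard part will be this last bookkeeping: organizing the quantum It\^o corrections so that the vanishing of the commutators decouples cleanly into a statement about $\Sbb$ alone and a statement about $L^0$, and in particular showing that the $\Kcal_\Wbb$-constraint is exactly symmetry of a unitary (not merely some weaker pairing) and that the $\Kcal_\Pbb$-constraint is exactly the block-wise proportionality with $S_i-I_\Hcal$. I expect the cleanest formulation to replace the single-time commutator computation by testing $\Acal_t(U)$ on exponential (coherent) vectors of $\Phi$, where the It\^o table produces explicit scalar kernels whose symmetry in the arguments $f,g$ encodes \eqref{LWL} and \eqref{eqtheocomalg1}; the genuinely delicate point is to ensure that commutativity at one fixed $t>0$ (hypothesis $(2)$) already pins down these pointwise-in-time relations, which should follow from the independence and stationarity of the quantum It\^o increments on $[0,t]$.
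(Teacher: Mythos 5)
Your overall architecture coincides with the paper's: the cycle $1)\Rightarrow2)\Rightarrow3)\Rightarrow4)\Rightarrow1)$, your treatment of $3)\Rightarrow4)$ (Takagi factorization of the symmetric unitary, absorption of the phases of the $\lambda_i$ into the basis vectors for the Poisson directions, discarding the deterministic $da^i_i$ terms when $\lambda_i=0$), and your treatment of $4)\Rightarrow1)$ (iterated stochastic integration keeps $U_s$, and likewise $U_s^*$, inside $\Bcal(\Hcal)\otimes\Ncal_t$, where $\Ncal_t$ is the commutative algebra generated by the classical noises) are exactly the paper's arguments. But the substance of the theorem is $2)\Rightarrow3)$, and there your proposal stops precisely where the proof begins: you assert that the vanishing It\^o corrections ``yield precisely'' \eqref{LWL} and \eqref{eqtheocomalg1}, then you yourself flag as unresolved the two points that constitute the proof --- that the $\Kcal_\Wbb$-constraint is symmetry of a unitary and not ``some weaker pairing'', and that commutativity at a single $t$ suffices --- and fall back on an untested coherent-vector strategy. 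That is a genuine gap, not residual bookkeeping.

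Here is what is missing and how the paper fills it. Writing $[U_s^{kl},U_s^{mn}]=0$ and $[(U_s^*)^{kl},U_s^{mn}]=0$ for all $s\le t$ (both families are needed), differentiating, setting each $da^i_j(s)$-coefficient to zero and letting $s\to0$ (where $U_s\to I$) yields the time-independent relations \eqref{LL} and \eqref{LsL}; this limit is all that is required to pass from one fixed $t$ to algebraic conditions on the constant coefficients $L^i_j$ --- no appeal to stationarity or independence of increments enters. Commutativity of all the matrices $\Lbb^{kl}$ and $(\Lbb^*)^{mn}$ then gives $\Acal(\Sbb)$ commutative, hence $\Sbb$ diagonalizable; after the corresponding change of noise, the case $i=0$, $j\ne0$ of \eqref{LL} gives \eqref{eqtheocomalg1} by a one-line proportionality argument. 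The Wiener condition, however, does not simply drop out: the case $i=j=0$ of \eqref{LsL}, after the Poisson contributions cancel, leaves only the Gram-type identity \eqref{LsL0p} between the families $u(k,l)=\bigl((L^0_i)^{kl}\bigr)_{i\in\Lambda_\Wbb}$ and $v(k,l)=\bigl(\overline{(L^0_i)^{lk}}\bigr)_{i\in\Lambda_\Wbb}$. One must then (a) upgrade this equality of inner products to the existence of a single unitary $W$ on $\C^{\tilde d}$ with $v(k,l)=W\,u(k,l)$ for all $k,l$ --- the paper does this by viewing $u$ and $v$ as operators $U,V:\Lcal^2(\Hcal)\to\C^{\tilde d}$, noting $U^*U=V^*V$, and comparing polar decompositions --- and (b) prove that this $W$ is symmetric, which follows by taking adjoints and transposes in the operator identity $(L^0)^*=W\,L^0$ to conclude $\overline W=W^*$, i.e. $W^t=W$. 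Steps (a) and (b) are the entire content of 3)i); without them your own $3)\Rightarrow4)$ argument, which consumes the symmetry of $W$ through the Takagi factorization, has nothing to act on, and the cycle does not close.
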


\begin{proof}
Obviously 1) implies 2). 

\smallskip\noindent
\underline{Proof of 2) implies 3) :}

 Let us write Equation \eqref{eqHPequnitaryform} as 
$$
dU_t=\sum_{i,j\in\Lambda\cup\{0\}} L^i_j\, U_t\, da^i_j(t)\,,
$$
for short.  

We consider a fixed orthonormal basis $(g_i)_{i\in I}$ of $\Hcal$ and the corresponding notation $T^{ij}$ for bounded operators on $\Psi$. We first exploit the relation $[U^{kl}_s\,,\,U^{mn}_s]=0$ for all $s<t$, all $k,l,m,n\in I$. Differentiating this equality and using the It\^o rule we get
\begin{align}
0&=
\sum_{i,j\in\Lambda\cup\{0\}} \left[  \left(L^i_jU_s\right)^{kl}U^{mn}_s   +U^{kl}_s\left(L^i_jU_s\right)^{mn}+\sum_{i'\in\Lambda}  \left(L^{i'}_jU_s\right)^{kl}\left(L^i_{i'}U_s\right)^{mn}\right]\, da^i_j(s)\nonumber\\
&\ \ \ -\sum_{i,j\in\Lambda\cup\{0\}} \left[  \left(L^i_jU_s\right)^{mn}U^{kl}_s   +U^{mn}_s\left(L^i_jU_s\right)^{kl}+\sum_{i'\in\Lambda}  \left(L^{i'}_jU_s\right)^{mn}\left(L^i_{i'}U_s\right)^{kl}\right]\, da^i_j(s)\,.\label{dUdU}
\end{align}
Identifying each of the coefficients of $da^i_j(s)$ to 0 and taking the limit $s\rightarrow 0$, we get, for all $k,l,m,n\in I$, all $i,j\in\Lambda\cup\{0\}$, 
$$
\left(L^i_j\right)^{kl} +\left(L^i_j\right)^{mn}+\sum_{i'\in\Lambda}  \left(L^{i'}_j\right)^{kl}\left(L^i_{i'}\right)^{mn}- \left(L^i_j\right)^{mn}  -\left(L^i_j\right)^{kl}-\sum_{i'\in\Lambda}  \left(L^{i'}_j\right)^{mn}\left(L^i_{i'}\right)^{kl}=0\,,
$$
that is,
\begin{equation}\label{LL}
\sum_{i'\in\Lambda}  \left(L^{i'}_j\right)^{kl}\left(L^i_{i'}\right)^{mn}-\left(L^{i'}_j\right)^{mn}\left(L^i_{i'}\right)^{kl}=0\,.
\end{equation}
Consider the block-matrix $\Lbb$ given by $\Lbb_{ij}=L^j_i$, $i,j\in\Lambda$. We denote by ${\Lbb}^{kl}$ the matrix with coefficients $(\Lbb^{kl})_{ij}=(\Lbb_{ij})^{kl}=(L^j_i)^{kl}$. With these notations we have 
$$
\left(\Lbb^{kl}\Lbb^{mn}\right)_{ij}=\sum_{i'\in\Lambda}\left(\Lbb^{kl}\right)_{ii'}\left(\Lbb^{mn}\right)_{i'j}=\sum_{i'\in\Lambda}\left(L^{i'}_i\right)^{kl}\left(L^j_{i'}\right)^{mn}\,.
$$
This way, Equation \eqref{LL} means 
\begin{equation}\label{LL2}
\left[\Lbb^{kl},\Lbb^{mn}\right]=0\,.
\end{equation}
 
We now exploit the relation $[(U^*_s)^{kl}\,,\,U^{mn}_s]=0$ for all $s<t$, all $k,l,m,n\in I$. Differentiating this equality and using the It\^o rule we get
\begin{align*}
0&=
\sum_{i,j\in\Lambda\cup\{0\}} \left[  \left(U_s^*(L^j_i)^*\right)^{kl}U^{mn}_s   +(U^*_s)^{kl}\left(L^i_jU_s\right)^{mn}+\sum_{i'\in\Lambda}  \left(U_s^*(L^{j}_{i'})^*\right)^{kl}\left(L^i_{i'}U_s\right)^{mn}\right]\, da^i_j(s)\\
&\ \ \ -\sum_{i,j\in\Lambda\cup\{0\}} \left[  \left(L^i_jU_s\right)^{mn}(U^*_s)^{kl}   +U^{mn}_s\left(U_s^*(L^j_i)^*\right)^{kl}+\sum_{i'\in\Lambda}  \left(L^{i'}_jU_s\right)^{mn}\left(U_s^*(L^{i'}_{i})^*\right)^{kl}\right]\, da^i_j(s)\,.
\end{align*}
Identifying each of the coefficients of $da^i_j(s)$ to 0 and taking the limit $s\rightarrow 0$, we get, for all $k,l,m,n\in I$, all $i,j\in\Lambda\cup\{0\}$, 
$$
\left((L^j_i)^*\right)^{kl} +\left(L^i_j\right)^{mn}+\sum_{i'\in\Lambda}  \left((L^{j}_{i'})^*\right)^{kl}\left(L^i_{i'}\right)^{mn}- \left(L^i_j\right)^{mn}  -\left((L^j_i)^*\right)^{kl}-\sum_{i'\in\Lambda}  \left(L^{i'}_j\right)^{mn}\left((L^{i'}_{i})^*\right)^{kl}=0\,,
$$
that is,
\begin{equation}\label{LsL}
\sum_{i'\in\Lambda}  \left((L^{j}_{i'})^*\right)^{kl}\left(L^i_{i'}\right)^{mn}-\left(L^{i'}_j\right)^{mn}\left((L^{i'}_{i})^*\right)^{kl}=0\,.
\end{equation}
The block-matrix $\Lbb$ defined above satisfies $(\Lbb^*)_{ij}=(\Lbb_{ji})^*=(L^i_j)^*$, so that 
$$
\left((\Lbb^*)^{kl}\right)_{ij}=\left((\Lbb^*)_{ij}\right)^{kl}=\left((L^i_j)^*\right)^{kl}\,.
$$ 
This way, Equation \eqref{LsL} means 
\begin{equation}\label{LsL2}
\left[(\Lbb^*)^{kl},\Lbb^{mn}\right]=0\,.
\end{equation}

With Equations \eqref{LL2} and \eqref{LsL2} we have proved that all the matrices $\Lbb^{kl}$ and $(\Lbb^*)^{mn}$ commute. As $\Sbb=\Lbb+I_{\Hcal\otimes\Kcal}$, we get that all the matrices $\Sbb^{kl}$ and $(\Sbb^*)^{mn}$ commute too, so that the algebra $\Acal(\Sbb)$ is commutative.

As a consequence the block-matrix $\Sbb$ can be block-diagonalized. As announced before the theorem, $\Kcal$ can be decomposed as $\Kcal=\Kcal_\Wbb\oplus\Kcal_\Pbb$ and the matrix $\Sbb$ can then be written as in Equation \eqref{decompunit}: 
\begin{equation*}
\Sbb=\begin{pmatrix}
I_{\Hcal\otimes\Kcal_\Wbb} & 0 \\
0 & \Sbb_\Prob
\end{pmatrix}
\end{equation*}
We now make a change of noise adapted to the decomposition of $\Kcal$ as the direct sum of $\Kcal_\Wbb$ and $\Kcal_\Pbb$. In particular, in the new noises, we have
\begin{equation*}
L^i_j=0 \text{ for all }i,j\in\Lambda,\ i\ne j\,.
\end{equation*}
Following our notations above, we put $S_i=S^i_i$ for all $i\in\Lambda$. Note that the coefficients $S_i$  have to be  unitary operators on $\Hcal$, for $\Sbb$ to be unitary on $\Hcal\otimes\Kcal$.  

\smallskip
With these reductions, Equation \eqref{LL} becomes, when $i=0$ and $j\ne0$
\begin{equation}\label{LL0}
\left(S_j-I\right)^{kl}\left(L^0_j\right)^{mn}=\left(S_j-I\right)^{mn}\left(L^0_j\right)^{kl}\,.
\end{equation}
On the other hand, Equation \eqref{LsL} reduces, when $i=j=0$, to
\begin{equation}\label{LsL0}
\sum_{i\in\Lambda} \left((L^0_i)^*\right)^{kl}\left(L^0_i\right)^{mn}=\sum_{i\in\Lambda}\left(L^i_0\right)^{mn}\left((L^i_0)^*\right)^{kl}\,.
\end{equation}

Let us consider some index $i\in \Lambda_\Pbb$, that is, for which $S_i\not = I$. In particular there exist $k,l\in I$ such that $\left(S_i-I_\Hcal\right)^{kl}\ne0$. Equation \eqref{LL0} then gives for all $m,n\in I$, 
\begin{equation*}
\left(L^0_i\right)^{mn}=\frac{\left(L^0_i\right)^{kl}}{\left(S_i-I\right)^{kl}}\left(S_i-I\right)^{mn}\,.
\end{equation*}
Defining $\lambda_i=\left(L^0_i\right)^{kl}/\left(S_i-I\right)^{kl}$, this gives $L^0_i=\lambda_i\, (S_i-I)$. This proves the property \eqref{eqtheocomalg1}.

\smallskip
We now come back to Equation \eqref{LsL0}, separating the indices in $\Lambda_\Pbb$ and those in $\Lambda_\Wbb$. Using the fact that $L^i_0=-(L^0_i)^*$ when $i$ belongs to $\Lambda_\Wbb$ and the relation $L^0_i= {\lambda_i}(S_i-I)$ when $i$ belongs to $\Lambda_\Pbb$,  we get from Equation \eqref{LsL0}
\begin{multline*}
\sum_{i\in\Lambda_\Wbb} \left((L^0_i)^*\right)^{kl}\left(L^0_i\right)^{mn}+\sum_{i\in\Lambda_\Pbb} |\lambda_i|^2 \left((S_i-I)^*\right)^{kl}\left(S_i-I\right)^{mn}=\hfill\\
\hfill=\sum_{i\in\Lambda_\Wbb}\left((L^0_i)^*\right)^{mn}\left(L^0_i\right)^{kl}+\sum_{i\in\Lambda_\Pbb} |\lambda_i|^2 \left((S_i-I)^*\right)^{kl}\left(S_i-I\right)^{mn}\,.
\end{multline*}
This reduces to
$$
\sum_{i\in\Lambda_\Wbb} \left((L^0_i)^*\right)^{kl}\left(L^0_i\right)^{mn}=\sum_{i\in\Lambda_\Wbb}\left((L^0_i)^*\right)^{mn}\left(L^0_i\right)^{kl}\,, 
$$
or else
\begin{equation}\label{LsL0p}
\sum_{i\in\Lambda_\Wbb} \overline{\left(L^0_i\right)^{lk}}\left(L^0_i\right)^{mn}=\sum_{i\in\Lambda_\Wbb}\overline{\left(L^0_i\right)^{nm}}\left(L^0_i\right)^{kl}\,.
\end{equation}
Put $u(k,l)=((L^0_i)^{kl})_{i\in\Lambda_\Wbb}\in\C^{\tilde d}$ and $v(k,l)=(\overline{(L^0_i)^{lk}})_{i\in\Lambda_\Wbb}\in\C^{\tilde d}$, for all $k,l\in I$. Equation \eqref{LsL0p} then becomes
$$
\langle u(l,k)\,,\,u(m,n)\rangle=\langle v(l,k)\,,\, v(m,n)\rangle\,,
$$
for all $k,l,m,n\in I$. 

We claim that this implies that there exists a unitary operator $W$ on $\C^{\tilde d}$ such that $W\, u(k,l)=v(k,l)$ for all $k,l\in I$. Indeed, we can assume that the family $\{u(k,l)\,;\ k,l\in I\}$ has maximal rank in $\C^{\tilde d}$, otherwise we complete it. The family $\{v(k,l)\,;\ k,l\in I\}$ has same rank, so we complete it in the same way. We denote by $\Lcal^2(\Hcal)$ the class of Hilbert-Schmidt operators on $\Hcal$. We recall that it is a Hilbert space when associated to the inner product given by the trace. Consider the operators $U$ and $V$ from $\Lcal^2(\Hcal)$ to $\C^{\tilde d}$ defined by:
$$
\begin{array}{cccc}
U: & \Lcal^2(\Hcal) & \to & \C^{\tilde d}\\
& \proj{k}{l} & \mapsto & u(k,l)
\end{array}
$$
(and the same for $V$ with $u(k,l)$ replaced with $v(k,l)$). By hypothesis we have $V^*V=U^*U$. Consider the polar decomposition of $U$ and $V$:
$$
U=M\sqrt{U^*U}\qquad\mbox{and}\qquad V=N\sqrt{V^*V}\,,
$$
where $M,N\,:\,\Lcal^2(\Hcal) \rightarrow C^{\tilde d}$ are partial isometries, where $\Ker M=\Ker U$ and $\Ker N=\Ker V$, where $\Ima M=\Ima U=\C^{\tilde d}=\Ima V=\Ima N$. Put $W_1$ to be a unitary operator on $\C^{\tilde d}$ which agrees with $M$ on $(\Ker M )^\perp$ and $W_2$ to be a unitary operator on $\C^{\tilde d}$ which agrees with $N$ on $(\Ker N )^\perp$. Putting $W=W_2W_1^*$ it is easy to check that $V=WU$. This proves the claim. 

Now, let us prove that $W$ has also to be symmetric, that is $W^t=W$. We have proved the relation $V=WU$, that is for all $k,l\in I$
$$
\begin{pmatrix}\left((L^0_1)^*\right)^{kl}\\\vdots \\\left((L^0_m)^*\right)^{kl}\end{pmatrix}=W\, \begin{pmatrix}\left(L^0_1\right)^{kl}\\\vdots \\\left(L^0_m\right)^{kl}\end{pmatrix}\,.
$$
Hence  we have
$$
\begin{pmatrix}(L^0_1)^*\\\vdots \\(L^0_m)^*\end{pmatrix}=W\, \begin{pmatrix}L^0_1\\\vdots \\ L^0_m\end{pmatrix}\,,
$$
so that
$$
\begin{pmatrix}L^0_1&\ldots &L^0_m\end{pmatrix}=\begin{pmatrix}(L^0_1)^*&\ldots &(L^0_m)^*\end{pmatrix}\, W^*
$$
and finally
$$
\begin{pmatrix}L^0_1\\\vdots \\L^0_m\end{pmatrix}=\overline{W}\, \begin{pmatrix}(L^0_1)^*\\\vdots \\ (L^0_m)^*\end{pmatrix}\,.
$$
We have proved that $\overline{W}=W^*$, hence $W^t=W$. We have proved the property \eqref{LWL}. We have proved that 1) implies 3). 

\smallskip\noindent
\underline{Proof of 3) implies 4) :}
 
If the coefficients $L^i_j$ satisfy all the properties described in 3), then, after the adequate change of noise, Equation \eqref{eqHPequnitaryform} reduces to
\begin{multline}\label{34}
dU_t=K_0\,U_t\,dt+\sum_{k\in\Lambda_\Wbb}{L^0_k\,U_t\,da^0_k(t)}+\sum_{k\in\Lambda_\Wbb} L^k_0\, U_t\, da^k_0(t)+\hfill\\
\hfill + \sum_{k\in\Lambda_\Pbb} \lambda_k \left(S_k-I\right)\, U_t\, da^0_k(t)+\sum_{k\in\Lambda_\Pbb} \overline{\lambda_k }\left(S_k-I\right)\, U_t\, da^k_0(t)+\sum_{k\in\Lambda_{\Pbb}} \left(S_k-I\right)\, U_t\, da^k_k(t)\,,
\end{multline}
where we do not need to detail the operator $K_0$ anymore here. We first focus on the Wiener part. Write $L_0$ the column of the coefficients $L^0_i$, $i\in\Lambda_\Wbb$ and $L^0$ the row of the coefficients $L^i_0$, $i\in\Lambda_\Wbb$. Recall that $L_0=-(L^0)^*$, but we also have as a condition in 3) that $((L^0)^*)^t=W(L^0)^*$, or else $(L^0)^*=(L^0)^t\, W^t$. 

The matrix $W$ is a symmetric unitary operator. By the Takagi Factorization theorem (see \cite{HJ12} for example), every symmetric complex square matrix can be decomposed as $V^t\, D\, V$, where $V$ is unitary and $D$ is diagonal with real positive entries. Thus $W$ admits such a decomposition. But the fact that $W$ is unitary gives
$$
I=W^*W=V^*\,D\, \overline V\, V^t \, D\, V=V^*\, D^2\, V\,.
$$
In particular $D^2=I$ and thus $D=I$. We have proved that $W$ is the form $V^t\, V$ for some unitary $V$. Actually, we apply this decomposition to $-W$ instead and we write $-W=V^t\, V$ for some unitary matrix $V$. 

As we said above, we have $L_0=-(L^0)^t\, W^t$, which gives
$$
L_0=(L^0)^t\,V^t \, V=(V\, L^0)^t\, V\,.
$$
On the other hand we have $L^0=V^*\,(V\,L^0)$. Hence if we put $K=V\, L^0$, the part 
$$
\sum_{k\in\Lambda_\Wbb}{L^0_k\,U_t\,da^0_k(t)}+\sum_{k\in\Lambda_\Wbb} L^k_0\, U_t\, da^k_0(t)
$$
of Equation \eqref{34} can now be written as
$$
\sum_{k\in\Lambda_\Wbb}{(V^*K)_k\,U_t\,da^0_k(t)}+\sum_{k\in\Lambda_\Wbb} ((K^t)V)_k\, U_t\, da^k_0(t)\,.
$$
Hence, by Proposition \ref{change}, if we apply the change of noise associated to $V$ to the orthonormal basis of $\Kcal_\Wbb$, we obtain a term of the form
$$
\sum_{k\in\Lambda_\Wbb}{K_k\,U_t\,da^0_k(t)}+\sum_{k\in\Lambda_\Wbb} K_k\, U_t\, da^k_0(t)=\sum_{k\in\Lambda_\Wbb}{K_k\,U_t\,dW^k_t}.
$$
Let us now concentrate on the part indexed by $\Lambda_\Pbb$ in Equation \eqref{34}. If we decompose each $\lambda_k$ into $\rho_k\, e^{i\theta_k}$, for those $\lambda_k\not=0$, then this part of \eqref{34} can be written as
$$
\sum_{k\in\Lambda_\Pbb} \rho_k \left(S_k-I\right)\, U_t\, \left(e^{i\theta_k}\, da^0_k(t)+e^{-i\theta_k}\, da^k_0(t)+\frac{1}{\rho_k}\,  da^k_k(t)\right)\,.
$$
After a change of noise $f_k\mapsto e^{i\theta_k}\, f_k$, the expression above reduces to
$$
\sum_{k\in\Lambda_\Pbb} \rho_k \left(S_k-I\right)\, U_t\, \left(da^0_k(t)+ da^k_0(t)+\frac{1}{\rho_k}\,  da^k_k(t)\right)\,,
$$
which is exactly of the form announced in 4), that is,
$$
 \sum_{k\in\Lambda_\Pbb} B_k\, U_t\, dX^k_t\,.
$$
For those $\lambda_k=0$, as we discussed in Subsection \ref{notations}, the equation gives rise to a term of the form
$$
\left(S_k-I\right)\, U_t\,  da^k_k(t)
$$
which is of no contribution in the probabilistic interpretation of the equation, nor the effect of the evolution on the small system $\Hcal$. 

\smallskip\noindent
\underline{Proof of 4) implies 1) :}
We start with a quantum Langevin equation of the form of Equation \eqref{Ubruits}
\begin{equation*}
dU_t=A_0\,U_t\, dt+\sum_{i\in\Lambda_\Wbb} A_i\, U_t\, dW^i_t+\sum_{i\in\Lambda_\Pbb} B_i\, U_t\, dX^i_t\,,
\end{equation*}
as stated in 4). 

Consider the von Neumann algebra $\Ncal_t$ generated by the operators $\{W^i_s\,,\, X^i_s\,;\ i\in\Lambda\,,\, s\leq t\}$.
It is clear that it is a commutative von Neumann algebra, for they are all self-adjoint and pairwise commuting operators. We claim that $\Acal_t(U)\subset\Ncal_t$, let us prove this fact. 

Indeed, with the conditions on the coefficients of Equation \eqref{Ubruits}, it is rather standard to prove that the solution $(U_t)$ can be obtained as the strong limit of a Picard iteration:
\begin{align*}
U_s^0&=I\,,\ \mbox{for all } s\leq t\,,\\
U_t^{n+1}&=I+\int_0^t A_0\,U_s^n\, ds+\sum_{i\in\Lambda_\Wbb} \int_0^t A_i\, U_s^n\, dW^i_s+\\
&\ \ \ +\sum_{i\in\Lambda_\Pbb}\int_0^t B_i\, U_s^n\, dX^i_s\,.
\end{align*}
Clearly all the $U_s^0$, $s\leq t$, belong to $\Bcal(\Hcal)\otimes\Ncal_t$. By induction, if all the $U_s^n$, $s\leq t$, belong to $\Bcal(\Hcal)\otimes\Ncal_t$, then so do all the $U_s^{n+1}$, $s\leq t$, for in the equation above $U_s^{n+1}$ is obtained as strong limit of Riemann sums of elements of $\Bcal(\Hcal)\otimes\Ncal_t$. Hence, passing to the strong limit, all the $U_s$, $s\leq t$, belong to $\Bcal(\Hcal)\otimes\Ncal_t$. The same argument works also for the $U_s^*$. As $\Acal_t(U)$ is the smallest von Neumann algebra such that $U_s\in\Bcal(\Hcal)\otimes\Acal_t(U)$ for all $s\leq t$ and the same for the $U_s^*$, we get the announced inclusion: $\Acal_t(U)\subset \Ncal_t$. 

As a consequence, $\Acal_t(U)$ is commutative, and this holds for all $t\in \R^+$. This proves 1). 

\smallskip
Strictly speaking, Equation \eqref{Ubruits} may contain additional terms with $da^k_k(t)$ alone as a driving noise. These terms corresponding to the cases $\lambda_k=0$ in 3). But these terms change nothing on the proof of "4) implies 1)", as we can add them to the commutative algebra $\Ncal_t$ and carry on with exactly the same proof.

\smallskip 
The theorem is proved. 
\end{proof}

\smallskip
We thought it could be of interest to make explicit the same theorem as above, but in the case $d=1$, for in this case it takes a particularly simple form. 

\begin{theo}\label{theocasd=1}
Consider the unitary solution $(U_t)$ of a quantum Langevin equation:
\begin{equation*}
dU_t=\left(iH+\frac{1}{2}L^*L\right)\, U_t\,dt+L\, U_t\, da^0_1(t)-L^*\, S\, U_tda^1_0(t)+(S-I_\Hcal)\,U_t\,da^1_1(t),
\end{equation*}
with $H,L,S\in\Bcal(\Hcal)$, $H=H^*$ and $S$ a unitary operator. Then the following assertions are equivalent:

\smallskip\noindent
1) $\Acal_t(U)$ is commutative for some $t>0$.

\smallskip\noindent
2) $\Acal_t(U)$ is commutative for all $t>0$.

\smallskip\noindent 
3) One of the following two conditions holds:

\noindent -- either $S=I_\Hcal$ and there exists $\theta\in\R$ such that $L^*=e^{i\theta}L$;

\noindent -- or there exists a complex number $\lambda$ such that $L=\lambda(S-I_\Hcal)$.

\smallskip\noindent
4) After the appropriate change of noise Equation \eqref{eqHPintro} takes the form of either Equation (\ref{eqHPintrobruitbrownian}) or Equation (\ref{eqHPintrobruitpoisson}) with a Poisson process of intensity $|\lambda|$ depending whether the first or the second case holds in 3).
\end{theo}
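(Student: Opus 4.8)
The plan is to obtain this statement as the specialization of Theorem \ref{theocomnoisealgebra} to the case $d=1$, in which $\Kcal=\C e_1$ is one-dimensional and $\Sbb=S\otimes\proj{1}{1}$ for the single unitary $S$ on $\Hcal$. The equivalence of 1) and 2) is then immediate, being exactly the equivalence of assertions 1) and 2) of the general theorem; so the real work is to verify that assertions 3) and 4) of that theorem collapse to the two-case formulation stated here.

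For assertion 3), I would first note that since $\dim\Kcal=1$ the block-matrix $\Sbb$ is trivially diagonal in the basis $(e_1)$, so the hypothesis that $\Sbb$ be diagonalizable (equivalently $\Acal(\Sbb)$ commutative) holds automatically, and the decomposition $\Kcal=\Kcal_\Wbb\oplus\Kcal_\Pbb$ becomes a dichotomy: either $S=I_\Hcal$, whence $\Sbb$ acts as the identity on $\Hcal\otimes\Kcal$ and $\Kcal_\Wbb=\Kcal$, $\Kcal_\Pbb=\{0\}$; or $S\ne I_\Hcal$, whence $\Kcal_\Wbb=\{0\}$ and $\Kcal_\Pbb=\Kcal$. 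In the first case the required change of noise is trivial, and condition \eqref{LWL} involves a single $1\times1$ symmetric unitary $W$, which is necessarily a phase $e^{i\theta}$, so \eqref{LWL} reads $L^*=e^{i\theta}L$ with $L=L^0_1$; this is the first bullet. In the second case $\Lambda_\Pbb=\{1\}$ and condition \eqref{eqtheocomalg1} reads $L=\lambda(S-I_\Hcal)$, which is the second bullet. I would also remark that the condition $L^*=e^{i\theta}L$ is insensitive to the only admissible change of noise in dimension one, namely $e_1\mapsto\mu e_1$ with $\abs\mu=1$, since by Proposition \ref{change} this merely shifts $\theta$ by $\arg\mu^{2}$, exactly as illustrated around Equation \eqref{eitheta}; hence stating the condition on the original coefficients is unambiguous.

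For assertion 4), I would observe that in the one-dimensional setting the reduction carried out in the proof of ``3) implies 4)'' of Theorem \ref{theocomnoisealgebra} produces a single driving noise. When $S=I_\Hcal$ the Wiener part yields one standard Brownian motion $B_t=\tilde a^0_1(t)+\tilde a^1_0(t)$, giving Equation \eqref{eqHPintrobruitbrownian}; when $S\ne I_\Hcal$ the Poisson part, after writing $\lambda=\rho\,e^{i\theta}$ and performing the phase change of noise $f_1\mapsto e^{i\theta}f_1$, yields the compensated Poisson combination $X_t=\tilde a^0_1(t)+\tilde a^1_0(t)+(1/\rho)\,\tilde a^1_1(t)$, giving Equation \eqref{eqHPintrobruitpoisson}, the intensity being determined by $\abs\lambda$ through the identification of these noise combinations recalled in Subsection \ref{notations}. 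Since Theorem \ref{theocomnoisealgebra} already establishes the full cycle of equivalences, there is no genuine obstacle beyond this bookkeeping; the only points requiring care are that the symmetry of $W$ is automatic for a $1\times1$ matrix (so it contributes only a phase) and that the two alternatives of the dichotomy $S=I_\Hcal$ versus $S\ne I_\Hcal$ are mutually exclusive, so that the first-versus-second formulation of the statement is faithful.
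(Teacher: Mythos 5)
Your proposal is correct and follows exactly the paper's intent: the paper gives no separate proof of Theorem \ref{theocasd=1}, presenting it as the explicit $d=1$ specialization of Theorem \ref{theocomnoisealgebra}, which is precisely the reduction you carry out (the dichotomy $\Kcal_\Wbb=\Kcal$ versus $\Kcal_\Pbb=\Kcal$, the $1\times1$ symmetric unitary $W=e^{i\theta}$ giving $L^*=e^{i\theta}L$, and condition \eqref{eqtheocomalg1} giving $L=\lambda(S-I_\Hcal)$). Your additional observations, that the conditions are insensitive to the only change of noise available in dimension one and that the overlap of the two bullets (which occurs only when $L=0$) is harmless, are correct and complete the bookkeeping.
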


\subsection{Lindblad Generators}\label{sect22}

One of the main motivation in constructing the unitary solution of quantum Langevin Equations is that their solutions give cocycle unitary dilations of Quantum Markov Semigroups (QMS). Indeed, let $U_\cdot$ be the unique unitary solution of Equation (\ref{eqHPequnitaryform}). Using the It\^o table for the quantum noises one shows that if we put
\begin{equation}
\Pcal_t(X)= \sca{\Omega}{U_t^* \left(X\otimes I_\Phi\right) U_t\, \Omega}\qquad X\in\Bcal(\Hcal)\,,
\label{eqQMSdil}
\end{equation}
for all $t\in\R^+$, then this defines a norm-continuous Quantum Markov Semigroup on $\Bcal(\Hcal)$. Moreover, its Lindblad generator $\Lcal(\cdot)$ is given in the Heisenberg picture by
\begin{equation}
\Lcal(X)=-i[H,X]+ \frac{1}{2}\sum_{k\in\Lambda}\left(-L_k^*L_kX-XL_k^*L_k+2L_k^*XL_k\right).
\label{eqLimbladgene}
\end{equation}

We see that the unitary operator $\Sbb$ does not play any role in this generator. For this reason, it is called the \emph{gauge} of the quantum Langevin Equation. 

Any generator of a norm-continuous QMS on $\Bcal(\Hcal)$ can be written under the form (\ref{eqLimbladgene}), so that the QMS admits a cocycle unitary dilation $U_\cdot$ solution of a quantum Langevin equation.

We now illustrate Theorem \ref{theocomnoisealgebra} with two applications to QMS: essentially commutative dilation and detailed balance condition.

A result of Kummerer and Maassen \cite{K-M4} characterizes the particular structure of those Lindblad generators for which the QMS admits an essentially commutative dilation. In their work, the term ``dilation" is more general than quantum Langevin equations. However, within our framework, we are able to obtain their result, as is stated in the following Theorem.

\begin{theo}\label{theoLindcom}
Let $\Pcal_\cdot$ be a QMS on $\Bcal(\Hcal)$. Then the following are equivalent.

\smallskip\noindent
1) The semigroup $\Pcal_\cdot$ admits a dilation $U_\cdot$, solution of a unitary quantum Langevin equation, such that $\Acal_{t}(U_\cdot)$ is commutative, for all $t>0$. 

\smallskip\noindent
2) There exist

\noindent
--  selfadjoint operators $H,L_1,...,L_m$ on $\Hcal$,

\noindent
-- unitary operators $S_1,...,S_n$ on $\Hcal$,

\noindent
-- positive real numbers $\lambda_1,...,\lambda_n$,

\noindent
such that the Lindblad generator $\Lcal$ of $\Pcal_\cdot$ is given by:
\begin{equation}\label{eqtheoLindcom}
\Lcal(X)=-i[H,X]+ \frac{1}{2}\sum_{k=1}^m\left(2L_kXL_k-L_k^2X-XL_k^2\right)+\sum_{k=1}^n{\lambda_i\left(S_k^*XS_k-X\right)}\,.
\end{equation}
\end{theo}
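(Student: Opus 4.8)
The plan is to read everything off Theorem \ref{theocomnoisealgebra} combined with the explicit formula \eqref{eqLimbladgene} for the Lindblad generator. The one preliminary fact I would establish first is that the generator \eqref{eqLimbladgene} is \emph{invariant under a change of noise}: if $\tilde L^0 = W^*L^0$ for a unitary $W$ on $\Kcal$, then $\sum_k(\tilde L^0_k)^*X\tilde L^0_k=\sum_k(L^0_k)^*XL^0_k$ and similarly for the anticommutator terms, because $WW^*=I$ makes the inner sum collapse to $\delta_{ij}$. Consequently I am free to compute $\Lcal$ in whichever noise basis is most convenient, in particular in the normal form provided by condition 3) of Theorem \ref{theocomnoisealgebra}.

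\textbf{1) implies 2).} Assume $U_\cdot$ is a dilation with $\Acal_t(U)$ commutative for all $t>0$. By Theorem \ref{theocomnoisealgebra}, condition 3) holds: after an appropriate change of noise, $\Kcal=\Kcal_\Wbb\oplus\Kcal_\Pbb$, with $S_i=I_\Hcal$ on $\Kcal_\Wbb$ and $L^0_i=\lambda_i(S_i-I_\Hcal)$ on $\Kcal_\Pbb$. On the Wiener block I would argue exactly as in the proof of ``3) implies 4)'': writing the symmetric unitary $W$ of \eqref{LWL} via the Takagi factorization as $-W=V^tV$ and applying the further change of noise $V$, the surviving coefficients $L^0_k$, $k\in\Lambda_\Wbb$, become anti-selfadjoint, i.e.\ $L^0_k=iL_k$ with $L_k=L_k^*$; substituting into \eqref{eqLimbladgene} and using $(iL_k)^*(iL_k)=L_k^2$ and $(iL_k)^*X(iL_k)=L_kXL_k$ reproduces the double-commutator term $\tfrac12\sum_k(2L_kXL_k-L_k^2X-XL_k^2)$ with selfadjoint $L_k$. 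On the Poisson block, after the phase change $f_k\mapsto e^{i\theta_k}f_k$ I may take $\lambda_k=\rho_k\geq0$, so $L^0_k=\rho_k(S_k-I_\Hcal)$; expanding with $S_k^*S_k=I$ gives $(L^0_k)^*L^0_k=\rho_k^2(2I-S_k-S_k^*)$ and $(L^0_k)^*XL^0_k=\rho_k^2(S_k^*XS_k-S_k^*X-XS_k+X)$, so that \eqref{eqLimbladgene} yields a diagonal part $\rho_k^2(S_k^*XS_k-X)$ plus a remainder that reassembles as $\tfrac12\rho_k^2[X,S_k^*-S_k]$. Since $S_k^*-S_k$ is anti-selfadjoint, the remainder equals $-i[\tilde H_k,X]$ with $\tilde H_k=\tfrac{i}{2}\rho_k^2(S_k-S_k^*)$ selfadjoint. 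Setting $\lambda_k:=\rho_k^2>0$ (discarding the indices with $\rho_k=0$, which contribute nothing), and $H':=H+\sum_k\tilde H_k$, the total generator is exactly \eqref{eqtheoLindcom}.

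\textbf{2) implies 1).} Conversely, given a generator of the form \eqref{eqtheoLindcom}, I would reverse-engineer the coefficients. Take $\Kcal=\C^{m+n}$, the first $m$ directions Wiener and the last $n$ Poisson. On the Wiener directions set $S_k=I_\Hcal$ and $L^0_k=iL_k$; on the Poisson directions set the gauge to the prescribed unitary $S_k$ and $L^0_k=\sqrt{\lambda_k}\,(S_k-I_\Hcal)$, and take as QLE Hamiltonian $H':=H-\sum_k\tilde H_k$ with $\tilde H_k$ as above. Theorem \ref{theoHPeq} then gives a unitary solution $U_\cdot$, and by running the computation of ``1) implies 2)'' backwards its generator \eqref{eqLimbladgene} is the prescribed $\Lcal$, so $U_\cdot$ dilates $\Pcal_\cdot$. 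Finally these coefficients satisfy condition 3) of Theorem \ref{theocomnoisealgebra}: $\Sbb$ is block-diagonal, hence diagonalizable; on the Wiener part $(L^0_k)^*=-iL_k=-L^0_k$, so \eqref{LWL} holds with the symmetric unitary $W=-I$; and on the Poisson part $L^0_k=\sqrt{\lambda_k}(S_k-I_\Hcal)$ is of the form \eqref{eqtheocomalg1}. Theorem \ref{theocomnoisealgebra} then guarantees that $\Acal_t(U)$ is commutative for all $t>0$, which is assertion 1).

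The main obstacle I anticipate is the Poisson bookkeeping: one must verify that the terms left over after extracting $\rho_k^2(S_k^*XS_k-X)$ genuinely recombine into a commutator with a selfadjoint operator, so that they can be transferred into the Hamiltonian without destroying self-adjointness. The change-of-noise invariance of \eqref{eqLimbladgene} established at the outset is what legitimizes performing all these reductions in the convenient basis while still computing the generator of the original equation.
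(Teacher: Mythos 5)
Your proof is correct and follows essentially the same route as the paper: both directions are obtained by reducing to Theorem \ref{theocomnoisealgebra}. The paper's own proof is a two-line ``immediate consequence'' that invokes the classical form \eqref{Ubruits} and says ``the result follows,'' whereas you supply the computations it leaves implicit --- the change-of-noise invariance of \eqref{eqLimbladgene}, the Wiener substitution $L^0_k=iL_k$, the Poisson expansion producing $\rho_k^2(S_k^*XS_k-X)$ plus the commutator remainder, and the selfadjoint Hamiltonian shift $\tilde H_k=\tfrac{i}{2}\rho_k^2(S_k-S_k^*)$ --- and all of these check out.
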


\begin{proof}
The proof is an immediate consequence of Theorem \ref{theocomnoisealgebra}. Indeed, if $\Pcal_\cdot$ admits a  dilation $U_\cdot$ such that $\Acal_{t}(U)$ is commutative for all $t\geq0$, then by Theorem \ref{theocomnoisealgebra} a change of noise leads to a quantum Langevin Equation for $U_\cdot$ of the form of Equation \eqref{Ubruits} and the result follows.

Conversely, if the Lindblad generator is of the form 2), the quantum Langevin equation with the corresponding coefficients has its algebra $\Acal_t(U)$ commutative for all $t$. 
\end{proof}

\medskip
The other link with QMS we want to mention concerns the detailed balance condition as defined in \cite{F-U1}. When the invariant state is the normalized trace, this condition summarizes into
\begin{equation*}
\Tr\left[\Lcal(X)Y\right]-\Tr\left[X\Lcal(Y)\right]=\Tr\left[XYK-YXK\right]\text{ for all }X,Y\in\Bcal(\Hcal),
\end{equation*}
where $K\in\Bcal(\Hcal)$ is a selfadjoint operator. Fagnola and Umanita proved in \cite{F-U1} that detailed balance condition with respect to the normalized trace holds if and only if there exists a representation of the Limblad generator such that $L_k=L_k^*$ for all $k\in\Lambda$. As a direct consequence of
Theorem \ref{theocomnoisealgebra} we obtain:

\begin{theo}
A QMS satisfies the detailed balance condition with respect to the normalized trace on $\Bcal(\Hcal)$ if and only if its admits a unitary dilation  which is the solution of a classical Langevin Equation with Brownian noises only.
\end{theo}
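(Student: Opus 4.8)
The plan is to chain together two facts recalled in the text: the Fagnola--Umanita criterion, which translates the detailed balance condition with respect to the normalized trace into the purely algebraic property that the generator $\Lcal$ admits a representation \eqref{eqLimbladgene} in which \emph{every} $L_k$ is selfadjoint; and Theorem \ref{theocomnoisealgebra}, which relates algebraic properties of the coefficients to the Brownian-versus-Poisson nature of the driving noises. Concretely I would establish the chain: the QMS satisfies detailed balance $\Leftrightarrow$ $\Lcal$ admits a representation with all $L_k=L_k^*$ $\Leftrightarrow$ the QMS admits a dilation of the form \eqref{Ubruits} with $\Lambda_\Pbb=\emptyset$, i.e. driven by Brownian motions only. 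Since the first equivalence is exactly the Fagnola--Umanita theorem quoted just before the statement, only the second one requires work.

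For the implication ``detailed balance $\Rightarrow$ Brownian dilation'' I would take a selfadjoint representation $\{L_k\}$ of $\Lcal$ provided by Fagnola--Umanita and build the unitary quantum Langevin equation \eqref{eqHPequnitaryform} with this Hamiltonian $H$, coefficients $L^0_k=L_k$ and trivial gauge $\Sbb=I_{\Hcal\otimes\Kcal}$. Because each $L_k$ is selfadjoint, condition \eqref{LWL} of Theorem \ref{theocomnoisealgebra} holds with the symmetric unitary $W=I$, and $\Kcal_\Pbb=\{0\}$ since $\Sbb=I_{\Hcal\otimes\Kcal}$; hence the implication ``3) $\Rightarrow$ 4)'' of that theorem puts the equation, after a change of noise, into the Brownian-only form \eqref{Ubruits} with $\Lambda_\Pbb=\emptyset$. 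By the dilation formula \eqref{eqQMSdil} the resulting process $(U_t)$ dilates exactly the QMS with generator $\Lcal$, which produces the required dilation.

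For the converse I would start from a dilation of the QMS by a classical Langevin equation driven by Brownian motions only. Read as assertion 4) of Theorem \ref{theocomnoisealgebra} with $\Lambda_\Pbb=\emptyset$, the associated decomposition of $\Kcal$ satisfies $\Kcal=\Kcal_\Wbb$, which forces a trivial gauge $\Sbb=I_{\Hcal\otimes\Kcal}$. Writing each Brownian increment as $da^0_k(t)+da^k_0(t)$ and matching coefficients, the unitarity relations of Theorem \ref{theoHPeq} (which for $\Sbb=I_{\Hcal\otimes\Kcal}$ read $L^k_0=-(L^0_k)^*$) force every driving coefficient $L^0_k$ to satisfy $L^0_k=-(L^0_k)^*$; writing $L^0_k=iL_k$ with $L_k=L_k^*$ and substituting into \eqref{eqLimbladgene}, the terms collapse to $-i[H,X]+\frac{1}{2}\sum_k\left(2L_kXL_k-L_k^2X-XL_k^2\right)$. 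This is a selfadjoint representation of $\Lcal$, so Fagnola--Umanita yields detailed balance. Throughout I would invoke the gauge invariance of \eqref{eqLimbladgene} --- the generator does not involve $\Sbb$, hence is unchanged by any change of noise --- to be sure these manipulations never alter the dilated semigroup.

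The main obstacle I anticipate is precisely this converse step: establishing that ``Brownian noises only'' rigidly forces both the trivial gauge $\Sbb=I_{\Hcal\otimes\Kcal}$ and the anti-selfadjointness of the driving operators, so that no Poisson contribution can hide in the generator. In particular one must handle with care the degenerate terms driven by $da^k_k(t)$ alone --- the case $\lambda_k=0$ discussed in Subsection \ref{notations} --- which carry no probabilistic content and do not enter \eqref{eqLimbladgene}: one has to check that excluding them (as ``Brownian noises only'' does) is harmless, so that the absence of genuine Poisson processes is truly equivalent to the absence of any $S_k$-term in \eqref{eqtheoLindcom}.
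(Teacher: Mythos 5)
Your proposal is correct and takes essentially the same route as the paper, which offers no written proof but presents the theorem precisely as a direct consequence of the Fagnola--Umanita criterion (detailed balance $\Leftrightarrow$ existence of a representation with all $L_k=L_k^*$) combined with Theorem \ref{theocomnoisealgebra}. Your two directions --- building the dilation with trivial gauge $\Sbb=I_{\Hcal\otimes\Kcal}$ and $W=I_{\Kcal}$ from a selfadjoint representation, and conversely extracting anti-selfadjoint coefficients $L^0_k=iL_k$ from a Brownian-only equation via the unitarity relations --- are exactly the intended argument, worked out correctly.
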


\section{Classical and Quantum parts of the Environment}

We are now back to general quantum Langevin equation and we shall prove that they can always be splitted into a maximal commutative part and a purely quantum one.

\subsection{The Decomposition Theorem}

In this section we study the Noise Algebra $\Acal_t(U)$ in the general case. Note that if $\Kcal=\Kcal_1\oplus\Kcal_2$, then by the exponential property of the symmetric Fock space one has

\begin{equation*}
\Phi=\Gamma_s\left(L^2(\R^+,\Kcal_1)\right)\otimes\Gamma_s\left(L^2(\R^+,\Kcal_2)\right).
\end{equation*}

\noindent For short, if $\tilde{\Kcal}$ is a subspace of $\Kcal$, we write $\Phi(\tilde{\Kcal})=\Gamma_s\left(L^2(\R^+,·\tilde{\Kcal})\right)$. Suppose that $U_\cdot$ is the unitary solution of Equation (\ref{eqHPequnitaryform}) and that both $\Hcal\otimes\Kcal_1$ and $\Hcal\otimes\Kcal_2$ are stable by $\Sbb$. Then, as already mentioned before, $\Lambda=\Lambda_1\cup\Lambda_2$ accordingly to the decomposition of $\Sbb$ and Equation (\ref{eqHPequnitaryform}) can be written as:

\begin{equation}\label{eqUdecomp}
d U_t=-\left(iH+\frac{1}{2}\sum_{i\in\Lambda}{(L^0_i)^*L^0_i}\right)U_t\,dt + \sum_{\substack{i,j\in\Lambda_1\cup\{0\}\\ i+j\ne0}}L^i_jU_tda_j^i(t) + \sum_{\substack{i,j\in\Lambda_2\cup\{0\}\\ i+j\ne0}}L^i_jU_tda_j^i(t).
\end{equation}

\begin{de}
Let $\Kcal_1$ be a subspace of $\Kcal$ and write $\Kcal_2=\Kcal_1^\perp$. We say that \emph{$\Phi(\Kcal_1)$ is a Commutative Subsystem of the Environment} if $\Kcal_1\ne\{0\}$ and:
\begin{itemize}
\item both $\Hcal\otimes\Kcal_1$ and $\Hcal\otimes\Kcal_2$ are stable by $\Sbb$. Consequently, up to a change of noise, Equation \eqref{eqHPequnitaryform} takes the form of Equation \eqref{eqUdecomp}.
\item $\Acal_t(U^1)$ is commutative for some $t>0$, where $U^1_\cdot$ is the unique unitary solution of the quantum Langevin equation:
\begin{equation*}
d U^1_t=-\frac{1}{2}\sum_{i\in\Lambda_1}\left((L^0_i)^*L^0_i\right)U^1_t\,dt + \sum_{\substack{i,j\in\Lambda_1\cup\{0\}\\ i+j\ne0}}L^i_jU^1_tda_j^i(t)
\end{equation*}
(i.e. we consider the quantum Langevin equation with only the coefficients indexed by $\lambda_1$).
\end{itemize}
\label{declassicalpart}
\end{de}

Consequently, using the notation of the previous definition, if $\Phi(\Kcal_1)$ is a commutative Subsystem of the Environment then Theorem \ref{theocomnoisealgebra} can be applied to $U^1_\cdot$, so that it obeys a classical stochastic differential equation driven by independent Poisson processes and Brownian Processes. This in turn implies conditions on the coefficients in Equation \eqref{eqUdecomp}.

\begin{theo}[Decomposition Theorem]
Suppose that $U_\cdot$ is the unique unitary solution of Equation (\ref{eqHPequnitaryform}). Then $\Kcal$ is the orthogonal direct sum of two subspaces $\Kcal_c$ and $\Kcal_q$ such that either $\Kcal_c=\{0\}$ or:
\begin{itemize}
\item $\Phi(\Kcal_c)$ is a Commutative Subsystem of the Environment.
\item If $\tilde{\Kcal}$ is a subspace of $\Kcal$ such that $\Phi(\tilde{\Kcal})$ is a Commutative Subsystem of the Environment, then $\tilde{\Kcal}$ is a subspace of $\Kcal_c$.
\item $U^q_\cdot$ does not have any Commutative Subsystem, where $U^q_\cdot$ is the unique unitary solution of the quantum Langevin equation:
\begin{equation*}
d U^q_t=-\frac{1}{2}\sum_{i\in\Lambda_q}\left((L^0_i)^*L^0_i\right)U^q_t\,dt + \sum_{\substack{i,j\in\Lambda_q\cup\{0\}\\ i+j\ne0}}L^i_jU^q_tda_j^i(t).
\end{equation*}
\end{itemize}
\label{theodecompcont}
\end{theo}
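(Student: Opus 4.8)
The plan is to build $\Kcal_c$ as the linear span of all subspaces carrying a Commutative Subsystem and to show that this span is itself the (necessarily unique) maximal Commutative Subsystem, after which $\Kcal_q=\Kcal_c^\perp$ and the third property follow. The first step is a reduction to the gauge. Any Commutative Subsystem $\tilde\Kcal$ forces $\Acal_t(U^1)$ to be commutative, so by Theorem \ref{theocomnoisealgebra} the restriction of $\Sbb$ to $\Hcal\otimes\tilde\Kcal$ is diagonalizable; together with the stability clause this means $\tilde\Kcal$ is a Commutative Subspace of the Environment in the sense of Definition \ref{ClassicalQuantumpart}. By the one-step Decomposition Theorem \ref{theodecompenv}, every such $\tilde\Kcal$ is contained in the maximal commutative subspace $\Kcal_c^{\Sbb}$ of the gauge, on which $\Sbb$ is block-diagonalizable. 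I would therefore work inside $\Kcal_c^{\Sbb}$, fix the eigenspace decomposition $\Kcal_c^{\Sbb}=\bigoplus_a\Kcal^{(a)}$ on which $\Sbb$ acts as $T_a\otimes I_{\Kcal^{(a)}}$, the block $T_a=I_\Hcal$ giving the Wiener part $\Kcal_\Wbb$ and the others the Poisson part $\Kcal_\Pbb$.

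The core is the lemma that the sum of two Commutative Subsystems $\tilde\Kcal_1,\tilde\Kcal_2$ is again one; granting it, finite dimensionality makes $\Kcal_c$ a Commutative Subsystem containing every other, which yields the first two bullets. For the stability clause of Definition \ref{declassicalpart} I would observe, via Proposition \ref{propenvironment}, that stability of $\Hcal\otimes\tilde\Kcal_i$ and of its complement is equivalent to $P_{\tilde\Kcal_i}\in\Acal(\Sbb)'$; since a von Neumann algebra is closed under the lattice operations on its projections, the projection $P_{\tilde\Kcal_1}\vee P_{\tilde\Kcal_2}$ onto $\tilde\Kcal_1+\tilde\Kcal_2$ again lies in $\Acal(\Sbb)'$, so the sum is stable, and it stays inside $\Kcal_c^{\Sbb}$ so $\Sbb$ remains diagonalizable on it. It then remains to check the coefficient conditions \eqref{LWL} and \eqref{eqtheocomalg1} for the sum. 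The Poisson part is the easy half: within a block $\Kcal^{(a)}$ with $T_a\ne I_\Hcal$, a direction $f$ meets \eqref{eqtheocomalg1} exactly when its coefficient $L^0_f$ lies in the line $\C\,(T_a-I_\Hcal)$, and since $f\mapsto L^0_f$ is conjugate-linear this ``good'' set is a subspace, hence automatically closed under sums.

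The main obstacle is the Wiener block, where \eqref{LWL} is a \emph{global} requirement — the existence of one symmetric unitary $W$ conjugating the whole coefficient vector $(L^0_i)$ to its adjoint $((L^0_i)^*)$ — rather than a direction-by-direction condition, so closure under (possibly non-orthogonal) sums is not evident. Here I would reformulate \eqref{LWL} intrinsically: following the computation in the proof of Theorem \ref{theocomnoisealgebra}, it is equivalent to the equality of positive operators $U^*U=V^*V$ on $\Lcal^2(\Hcal)$ attached to the maps $\proj{k}{l}\mapsto u(k,l)$ and $\proj{k}{l}\mapsto v(k,l)$, equivalently to the existence of an orthonormal basis of the Wiener block in which every coefficient $L^0_i$ is anti-selfadjoint (the genuine Brownian form produced by the Takagi factorization). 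I would use this to exhibit a unique maximal Wiener-good subspace of $\Kcal_\Wbb$ and to verify that the Wiener parts of $\tilde\Kcal_1$ and $\tilde\Kcal_2$ both embed in it. The delicate point to control is that an $\Sbb$-reducing subspace need not split along the $\Kcal^{(a)}$ when the blocks $T_a$ are linearly dependent; this I would handle by refining the decomposition so that the constraint $P_{\tilde\Kcal_i}\in\Acal(\Sbb)'$ forces it to respect the refined blocks.

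Finally, setting $\Kcal_q=\Kcal_c^\perp$, the third bullet follows by a lifting argument. Since $\Kcal_q$ is itself $\Sbb$-reducing, any Commutative Subsystem $\tilde\Kcal\subseteq\Kcal_q$ of $U^q$ involves only the coefficients indexed by $\tilde\Kcal$, which are unchanged when $\tilde\Kcal$ is viewed inside $\Kcal$; hence $\tilde\Kcal$ is also a Commutative Subsystem of $U$, so by maximality $\tilde\Kcal\subseteq\Kcal_c\cap\Kcal_q=\{0\}$, contradicting $\tilde\Kcal\ne\{0\}$. Thus $U^q$ admits no Commutative Subsystem, which completes the proof modulo the Wiener closure step, where the real work lies.
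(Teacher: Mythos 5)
You correctly located the crux --- everything hinges on the lemma that the sum of two Commutative Subsystems is again one --- but your proposal stops exactly there: the Wiener closure step is only sketched (``where the real work lies''), so as written this is not a proof. Unfortunately the problem is not just that the step is missing: the closure lemma is \emph{false}, so it cannot be filled in along these lines. Take $\Hcal=\C^2$, $\Kcal=\C^2$, $H=0$, $\Sbb=I_{\Hcal\otimes\Kcal}$ (so the stability clause of Definition~\ref{declassicalpart} is vacuous and the whole of $\Kcal$ is of Wiener type), with coefficients
\begin{equation*}
L^0_1=\begin{pmatrix}0&1\\1&0\end{pmatrix}=\sigma_x\,,\qquad L^0_2=\begin{pmatrix}i&-1\\-1&-i\end{pmatrix}=i\sigma_z-\sigma_x\,.
\end{equation*}
The line $\C e_1$ is a Commutative Subsystem: its coefficient $\sigma_x$ is selfadjoint, so Theorem~\ref{theocasd=1} applies. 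The line $\C g$ with $g=(e_1+e_2)/\sqrt 2$ is also a Commutative Subsystem: by Proposition~\ref{change} its coefficient is $(L^0_1+L^0_2)/\sqrt2=i\sigma_z/\sqrt2$, which satisfies $L^*=e^{i\pi}L$. Their sum is all of $\Kcal$. Yet $\Kcal$ is \emph{not} a Commutative Subsystem: condition 3)~i) of Theorem~\ref{theocomnoisealgebra} requires a symmetric unitary $W$ with $(L^0_i)^*=\sum_j W_{ij}L^0_j$, and since $L^0_1,L^0_2$ are linearly independent with adjoints in their span, such a $W$ is unique, namely $W=\left(\begin{smallmatrix}1&0\\-2&-1\end{smallmatrix}\right)$, which is neither unitary nor symmetric. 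So each line alone generates a Brownian, hence classical, evolution, but the joint noise algebra is non-commutative. The mechanism is precisely the one you flagged: Wiener-classicality of a direction $f$ means $L^0_f$ is a \emph{phase times a selfadjoint operator}, a real-projective condition, not a complex-linear one; the good directions do not form a subspace, and the two Brownian motions attached to $e_1$ and $g$, as operator processes, do not commute with one another.

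This also explains why you could not push the Takagi reformulation through, and it is worth knowing that the paper's own proof has the very same hole, in a cruder form: it produces a \emph{maximal} projection $P_c$ by Zorn's lemma and then asserts without argument that every Commutative Subsystem is dominated by $P_c$. Maximal elements dominate all others only if the family is directed (closed under joins) --- exactly the false closure lemma. In the example above, $\C e_1$ and $\C g$ are two incomparable maximal elements, so no choice of $\Kcal_c$ satisfies the first two bullets of Theorem~\ref{theodecompcont} simultaneously; with Definition~\ref{declassicalpart} taken literally, the statement itself fails. A correct statement needs a stronger notion of Commutative Subsystem in which classicality is relative to the full evolution (for instance, requiring the noises attached to $\tilde\Kcal$ to commute with the whole noise algebra $\Acal_t(U)$, or defining $\Kcal_c$ from a commutative part of $\Acal_t(U)$ itself) rather than a property of the restricted equation $U^1$ alone. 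Your reduction of the Poisson blocks to a linear condition and your identification of stability with $P_{\tilde\Kcal}\in\Acal(\Sbb)'$ are correct and would survive such a repair; the Wiener closure step is where both your argument and the paper's break down.
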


\begin{proof}
The first step of the proof is to identify the subspace $\Kcal_c$. To do that, let $\Pcal_c$ be the set of orthogonal projections $P\in\Acal(\Sbb)'$ such that $P\in\Pcal_c$ iff $\Phi(P\Kcal)$ is a Commutative Subsystem of the Environment. We claim that $\Pcal_c$ has a maximal element. Indeed, as $\Kcal$ is finite dimensional, for any totally order set $\Pcal\subset\Pcal_c$ there exists a projection $P_{\max}\in\Pcal$ such that $P_{\max}\Kcal$ has the highest dimension for this set. Consequently $P\leq P_{\max}$ for all $P\in\Pcal$. Thus $\Pcal_c$ is an inductive set and by Zorn Lemma it has a maximal element that we write $P_c$. We take $\Kcal_c=P_c\Kcal$.

Suppose now that $P_c\ne0$. By definition, $\Phi(\Kcal_c)$ is a Commutative Subsystem of the Environment. Furthermore, if $\tilde{\Kcal}$ is a subspace of $\Kcal$ such that $\Phi(\tilde{\Kcal})$ is a Commutative Subsystem of the Environment, then the orthogonal projection on $\tilde{\Kcal}$ is dominated by $P_c$ so that $\tilde{\Kcal}$ is a subspace of $\Kcal_c$. Consequently $U^q_\cdot$ does not have any Commutative Subsystem.
\end{proof}

\begin{rem}We emphasize that Theorem \ref{theodecompcont} states the existence of a decomposition, without providing any practical way to explicit it in terms of the coefficients. Indeed, the first step in order to find the decomposition is the study the Environment Algebra $\Acal(\Sbb)$. For small matrices this can be done for instance numerically. For instance, in \cite{Bar2} it is proved that $\Acal(\Sbb)'$ is the eigenspace for the eigenvalue $1$ of a certain completely positive map on $\Kcal$. This provides by 
Theorem \ref{theodecompenv} a decomposition of $\Sbb$ as $\Sbb=\Sbb_1+\Sbb_2$, such that $\Sbb_1$ is the maximal block-diagonal unitary operator that we can extract from $\Sbb$. However this does not give the decomposition of $\Phi(\Kcal)$ into a classical and a quantum part.\\
In the next subsection we develop this point with several examples.
\end{rem}

\subsection{Examples and open problems}

In the first example below we prove that the spontaneous emission whose evolution is given by Equation \eqref{eqspontem} has a purely quantum environment.

\begin{ex}[Spontaneous Emission]\label{exnoncom}
Take $\Hcal=\C^2$ and $\Kcal=\C$. Let $U_\cdot$ be the solution of the quantum Langevin Equation
\begin{equation*}
dU_t=-\frac{1}{2}V^*V\,U_t\,dt+V\, U_t\,da^0_1(t)-V^*\,U_t\,da^1_0(t),
\end{equation*}
where 
$$
V=\begin{pmatrix}0&1\\0&0\end{pmatrix}\,.
$$
Here $\Sbb=I_\Hcal$, so that $\Acal(\Sbb)$ is commutative. However, clearly there does not exist $\lambda\in\C$ such that $V^*=\lambda V$, so that by Theorem \ref{theocomnoisealgebra}, $U_\cdot$ has a purely quantum environment.
\end{ex}

We rely on this typical evolution in order to construct one example where the decomposition is explicit.

\begin{ex}[An explicit decomposition]\label{exdecompexplicit}
Take $\Hcal=\C^2$ and $\Kcal=\C^2$. We consider the following coefficients in the quantum Langevin Equation, with $\lambda,\theta\in\R$, $\lambda>0$:
\begin{equation*}
\Sbb=\begin{pmatrix}
\sin^2\theta & \cos^2\theta & \sin\theta\cos\theta & \sin\theta\cos\theta \\
\cos^2\theta & \sin^2\theta &\sin\theta\cos\theta & \sin\theta\cos\theta \\
-\sin\theta\cos\theta & \sin\theta\cos\theta & -\cos^2\theta & \sin^2\theta \\
\sin\theta\cos\theta & -\sin\theta\cos\theta & \sin^2\theta & -\cos^2\theta
\end{pmatrix},
\end{equation*}
\begin{equation*}
L^0_1=\begin{pmatrix}
-\lambda\cos\theta & \lambda\cos\theta+\sin\theta \\
\lambda\cos\theta & -\lambda\cos\theta
\end{pmatrix},\qquad L^0_2=\begin{pmatrix}
-\lambda\sin\theta & \lambda\sin\theta-\cos\theta \\
\lambda\sin\theta & -\lambda\sin\theta
\end{pmatrix}.
\end{equation*}
After the change of noise given by the unitary operator
\[
W=\begin{pmatrix}
\sin\theta & \cos\theta \\ -\cos\theta & \sin\theta
\end{pmatrix},
\]
we find in the new basis:
\begin{equation*}
\tilde \Sbb=\begin{pmatrix}
1 & 0 & 0 & 0 \\
0 & 1 & 0 & 0 \\
0 & 0 & 0 & 1 \\
0 & 0 & 1 & 0
\end{pmatrix},\qquad \tilde L^0_1=\begin{pmatrix}
0 & 1 \\ 0 & 0
\end{pmatrix},\qquad \tilde L^0_2=\begin{pmatrix}
-\lambda & \lambda \\ \lambda & -\lambda
\end{pmatrix}=\lambda(S-I_\Hcal)\,,
\end{equation*}
where $S=\begin{pmatrix}
0 & 1 \\ 1 & 0
\end{pmatrix}$. Consequently the quantum Langevin Equation takes the form:
\[
dU_t=-\frac{1}{2}\left((\tilde L^0_1)^*\tilde L^0_1+(\tilde L^0_2)^2\right)U_t\,dt+\tilde L^0_1 U_t\,d\tilde a^0_1(t)-(\tilde L^0_1)^*U_t\,d\tilde a^1_0(t)+\tilde L^0_2 U_t\,dX_t,
\]
where $X_t$ is a compensated Poisson process of intensity $\lambda$ and jumps $1/\lambda$.

\smallskip
In this example, the decomposition between classical and quantum noises resumes to the decomposition of $\Sbb$, because it is possible to clearly identify the classical and quantum part after the diagonalization of $\Sbb$. This is not always possible.
\end{ex}

\begin{ex}[A non-explicit decomposition]\label{exdecompnonexplicit}
Take $\Hcal=\C^2$ and $\Kcal=\C^2$. We assume in this example that $\Sbb=I_{\Hcal\otimes\Kcal}$. Consequently the matrix of $\Sbb$ does not depend on the choice of the basis of $\Kcal$. However the coefficients $L^0_i$ do. Consider for instance:
\begin{equation*}
L^0_1=\begin{pmatrix}
-\cos\theta & \cos\theta+\sin\theta \\
\cos\theta & -\cos\theta
\end{pmatrix},\qquad L^0_2=\begin{pmatrix}
-\sin\theta & \sin\theta-\cos\theta \\
\sin\theta & -\sin\theta
\end{pmatrix}\,.
\end{equation*}
It happems that the same unitary operator $W$ as in the previous example gives the decomposition (with $\lambda=1$) in a classical and a quantum part. The point is that we know no algorithm in order to compute $W$, by only looking at the coefficients $L^0_1$ and $L^0_2$.
\end{ex}

\bibliographystyle{abbrv}
\bibliography{biblio}

\end{document}